\title{Robustness of voting mechanisms to external information in expectation}
\author[1,2]{Yiling Chen} 
\affil[1]{Harvard University} 
\author[1, 2]{Jessie Finocchiaro} 
\affil[2]{Harvard Center for Research on Computation and Society}
\affil[ ]{\texttt{\href{mailto:yiling@seas.harvard.edu}{yiling@seas.harvard.edu}}, \texttt{\href{mailto:finocch@bc.edu}{finocch@bc.edu}}}
\date{}
\begin{document}

\maketitle

\begin{abstract}
    Analyses of voting algorithms often overlook informational externalities shaping individual votes.
    For example, pre-polling information often skews voters towards candidates who may not be their top choice, but who they believe would be a worthwhile recipient of their vote.
    In this work, we aim to understand the role of external information in voting outcomes.
    We study this by analyzing (1) the probability that voting outcomes align with external information, and (2) the effect of external information on the total utility across voters, or \emph{social welfare}.  
    In practice, voting mechanisms elicit coarse information about voter utilities, such as ordinal preferences, which initially prevents us from directly analyzing the effect of informational externalities with standard voting mechanisms. 
    To overcome this, we present an intermediary mechanism for learning how preferences change with external information which does not require eliciting full cardinal preferences.
    With this tool in hand, we find that voting mechanisms are generally more likely to select the alternative most favored by the external information, and when external information reflects the population's true preferences, social welfare increases in expectation.
\end{abstract}

\section{Introduction}

Classical results in voting theory prove that no voting mechanism, which is both non-dictatorial and anonymous, can be entirely immune to strategic manipulation~\citep{gibbard1973manipulation,satterthwaite1975strategy}. 
This phenomenon is often evident in real-world scenarios: for instance, a voter who supports candidate $a$ might opt to vote for candidate $b$ instead if candidates $b$ and $c$ are leading in the polls (see, for example, \citep{frenchstrategicvoting}). 
This strategic behavior often appears to be ``directed'' in practice, influencing voters to align their preferences with a piece of external information. 
We explore how this external information, which we call \emph{anchoring information}, (1) affects the probability its most favored alternatives are selected by different voting mechanisms.
When external information is in line with societal preferences, we additionally examine (2) the impact of external information on total utility over the population, measured in social welfare.

Understanding the effects of anchoring information on voting outcomes is not straightforward: most voting mechanisms only elicit \emph{ordinal} preferences from voters, but \emph{cardinal} preferences conveying the strength of preferences are typically needed to know when a voter will respond to anchoring information.
In practice, cardinal preferences are often unknown and impractical to elicit from voters.
However, without access to cardinal utilities, it is unknown how to reason about the effect of anchoring information because we have no information about the strength of preferences within the available ordinal preferences.
Therefore, we must find some way to understand the effect of anchoring information without eliciting cardinal utilities.
Only once we have done this can we begin to study the impact of anchoring information on voting outcomes in expectation.

To this end, we introduce an intermediate mechanism in \S~\ref{sec:elicit-social} that enables us to analyze the effect of anchoring information in order to answer the following three questions:

\begin{enumerate}
    \item In \S~\ref{sec:sw-standard}, we obtain bounds on the probability an alternative is selected by a voting rule; do these bounds tighten when voters have anchored preferences? (Corollaries~\ref{cor:social-prefs-tighten-lb} and \ref{cor:social-prefs-loosen-ub})
    \item When does the expected social welfare increase with anchored preferences? (Theorem~\ref{thm:inc-SW}) 
    \item Can we bound the probability that social welfare decreases on any given voting instance? (Theorem~\ref{thm:bound-prob-dec})
\end{enumerate}

\subsection{Related work}
\paragraph{Anchored preferences}
Recently, \citet{flanigan2023distortion} use implicit utilities (term from~\citep{caragiannis2017subset}) to examine the distortion that emerges under \emph{public-spirited voting}.
Their model of public-spirited utilities is a direct analog of the model presented by \citet{chen2014altruism}, which studies the price of anarchy in congestion games with altruistic players.
Our model of utilities is very similar to the previous models, but is slightly different in that \citet{flanigan2023distortion} and \citet{chen2014altruism}, inspired by \citet[p. 154]{ledyard1997public}, model players and voters as moving towards the \emph{optimal} social outcome, while we model voters that move towards \emph{a given} social outcome.
Their models have slightly more flexiblility on the weight of social outcomes as they analyze worst-case (distortion, price of anarchy) results in their respective settings, while we focus on \emph{average-case} outcomes, which requires the additional assumption that all voters weigh the external information equally.\footnote{Average-case voting analyses are based on the premise that votes are i.i.d. samples from some distribution; changing $\alpha$ induces some unique distribution shift per voter, and therefore samples are no longer identically distributed.}
\citet{bedaywi2023distortion} use the same model of voting as \citet{flanigan2023distortion} to bound the distortion of participatory budgeting mechanisms with public-spirited voters.

\paragraph{Average-case voting}
While the majority of the computational social choice literature focuses on worst-case impossibilities and tradeoffs, a solid line of work has examined average-case voting outcomes.
These works tend to examine voting outcomes in expectation over i.i.d. votes from some density $\mu$.
When $\mu$ is the uniform density, this yields the \emph{impartial culture} assumption, which is notably unrealistic, but an important starting point for understanding average-case voting~\citep{lehtinen2007unrealistic,tsetlin2003impartial}.
\citet{boutilier2012optimal} and \citet{apesteguia2011justice} both study the expected social welfare in average-case voting under slightly more general assumptions than impartial culture.
Both show that, if the impartial culture assumption is satisfied, then Borda voting maximizes the expected social welfare conditioned on the fixed ordinal votes.
Our approach to studying average-case voting is quite different; instead of examining expected social welfare conditioned on a voting profile, we are interested in changes to expected social welfare under mild, but directed, perturbations to voters' utilities.

\paragraph{Metric voting}
Finally, our voting model is similar to, but subtly different from metric voting, in which voters and alternatives are embedded into $\reals^d$, then preferences are computed by evaluating relative distances to alternatives.
Distortion bounds in metric voting models are well understood~\citep{anshelevich2018approximating,skowron2017social,charikar2023breaking}.
In our model, instead of voters and \emph{alternatives} being embedded into $\reals^d$, as in metric voting, we assume voters and \emph{reports} are embedded into $X \subseteq \reals^m$, and voters submit the report they are closest to in distance.

\section{Voting Model}
Suppose $n$ voters are working together to select an alternative in $[m] := \{1,2,\ldots, m\}$.
Voter $i$ has some normalized utility $u_i \in \simplex$, where $\simplex = \{u \in \reals^m_+ : \|u\|_1 = 1\}$ denotes the probability simplex.
We suppose utilities are drawn from some density $\mu$ defining the probability space $([m], \simplex, \mu)$.
Voters must select some score vector $r \in \R$ to assign to alternatives such that $|\R| = R < \infty$, where $\R \subset c \simplex$ for some $c > 0$ is a subset of a general simplex.
(Up to scaling, we can intuitively think of $\R$ as being a subset of $\simplex$.)
At times, we abuse notation and write $\R$ as a matrix in $\reals^{m \times R}$.
A set of votes is then input to the voting rule $f : \N^R \to [m]$ which maps frequencies of assigned score vectors to one selected alternative.
If $f$ is anonymous, then for each $r \in \R$, every permutation $\sigma(r) \in \R$ as well.
We often work with two $\R$: for Plurality, we let $\R = \Rplurality := \{e_i : i \in [m]\}$ be the standard basis, and when eliciting fully ordinal rankings, we let $\R = \Rborda := \{\sigma([m])\}$ be all permutations of $[m]$.

We suppose voter $i$ has preferences $u_i \sim \mu$, and voter $i$ chooses to report the score $r \in \R$ minimizing $d(u_i, r)$, where $d(\cdot, \cdot)$ is Euclidean distance.
The ``scoring menu'' $\R$ induces the function $\GammaR: u \mapsto \argmin_{r \in \R} d(u,r)$ mapping real-valued utilities to reports.
If $\R$ is understood from context, we simply denote this function $\Gamma$.
We suppose ties are broken uniformly at random between relevant parties.
Often, we work with the level set $\Gamma_r = \{u \in \simplex : r \in \Gamma(u)\}$, which is the set of cardinal preferences leading to a voter submitting $r \in \R$.
Each kite in Figure~\ref{fig:plurality-ind} represents a different level set for $\Rplurality$.

\begin{example}[Plurality voting]
Consider the Plurality voting rule over $m = 3$ alternatives: $a$, $b$, and $c$.
A vote for alternative $a$ ascribes scores $(1,0,0)$ to alternatives $(a,b,c)$ respectively; a vote for alternative $b$ ascribes $(0,1,0)$, and a vote for $c$ score $(0,0,1)$.
The level set $\Gamma_{(1,0,0)} = \{u \in \Delta_3 \mid u_1 \geq u_2 \text{ and } u_1 \geq u_3\}$ is then the set of utility profiles representing voters who most prefer alternative $a$.
For a voter with cardinal utility $(\frac 1 5, \frac 1 2, \frac 3 {10})$, we have their vote $\Gamma((\frac 1 5, \frac 1 2, \frac 3 {10})) = \{(0,1,0)\}$, meaning they vote for alternative $b$.
Equivalently, their utility $(\frac 1 5, \frac 1 2, \frac 3 {10}) \in \Gamma_{(0,1,0)}$ is in the $(0,1,0)$ level set.
See Figure~\ref{fig:plurality-ind} for depictions of the level sets for plurality and Figure~\ref{fig:borda-ind} for Borda and other voting rules that elicit full ordinal rankings over outcomes.
\end{example}

Since preferences are drawn from some unknown $\mu$, preferences are typically approximated by measuring $\frac{|\{i \mid \Gamma(u_i) = r\}|}{n} \approx \mu(\Gamma_r) = \int_{\Gamma_r} d\mu$.
Throughout, the measure of these level sets serve as a central piece of our analysis, as external information shifts the distribution $\mu$ in some directed way.
In general, we denote the probability distribution over reports with $p := \{\mu(\Gamma_r)\}_{r\in\R}$.

\begin{figure}[h]
\begin{minipage}{0.45\linewidth}
    \centering
\includegraphics[width=\linewidth]{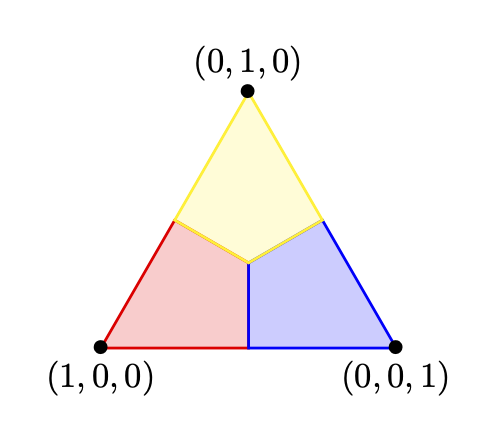}
    \caption{Plurality: Voters whose utilites lie in the red region (lower left) vote for alternative $a$ by ascribing score $(1,0,0)$, and similar for yellow voters voting for alternative $b$, and blue voting for alternative $c$.}
    \label{fig:plurality-ind}
\end{minipage}
\hfill
\begin{minipage}{0.45\linewidth}
    \centering
\includegraphics[width=\linewidth]{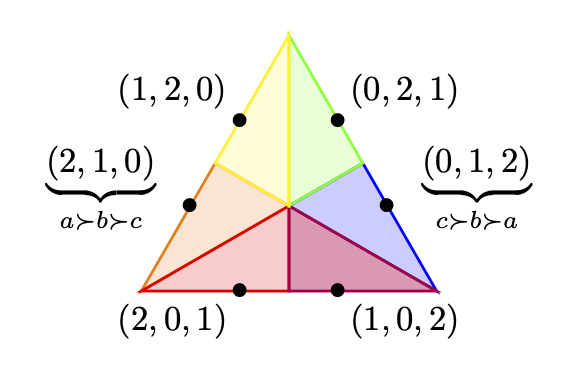}
    \caption{Fully ordinal prefs: Voters whose utility lies in the orange region vote $a \succ b \succ c$ by ascribing score $(2,1,0)$ and so on by permutations of $[m]$.}
    \label{fig:borda-ind}
\end{minipage}
\end{figure}

To reason about anchoring information, we suppose a mechanism designer knows this information $w \in \simplex$ and voters are influenced by weight $\alpha \in [0,1]$ such that, instead of voting according to their standard utilities $u$, voters instead vote for the report $r \in \R$ minimizing $d((1-\alpha) u_i + \alpha w,r)$.
That is, each voter shifts their preferences $\alpha$ towards $w$.
We are interested in understanding the role of $w$ in changing voting outcomes, particularly its effect on social welfare.

\begin{example}[Changing vote via $w$]
    Suppose a voter has utility $(1/2, 9/20,1/20)$, meaning they very slightly prefer the first outcome the most, then the second, and have almost no utility for the third.
    If external information suggests $w = (0, 1/2, 1/2)$ and they are influenced with weight $\alpha = \frac 1 {10}$, then their updated preferences are $(0.45, 0.455, 0.095)$, and they switch their vote from candidate $a$ to candidate $b$.
\end{example}

Given a profile of $n$ utilities $\bld u \in \simplex^n$, we consider the induced histogram of votes $\#\Gamma(\bld u) \in \N^R$, which are determined by cardinal utility profiles $\bld u$ and the (property induced by the) report set $\GammaR$. 
The $r^{th}$ entry of the histogram represents the number of voters submitting score $r$.\footnote{We generally assume preferences are drawn from a smooth enough distribution $\mu$ that the probability of a voter being exactly indifferent between two reports is $0$.}
We denote a general histogram of votes by $h \in \mathbb{N}^R$.

\begin{example}[Induced histograms]
    Consider the Borda voting rule with the report set $\R$ being all $m!$ permutations of $\{0, 1, \ldots, m-1\}$. 
    If $m = 3$ and $n = 15$, a voting profile $h = (1,4,2,2,5,1) \in \mathbb{N}^{m!}$ implies one voter prefers $a \succ b \succ c$, four voters prefer $a \succ c \succ b$, etc., fixing the order of $\R$.
    If the voting rule is Plurality, then $h \in \mathbb{N}^m$, and $h = (5,4,6)$ means $5$ voters have $a$ as their most preferred candidate, $4$ have $b$, and $6$ have $c$ as their most preferred.
\end{example}

In \S~\ref{sec:change-sw}, we examine positional scoring rules $f : \mathbb{N}^R \to [m]$, which are those that select the alternative with highest score given a histogram of score assignments $h \in \mathbb{N}^R$, denoted $f(h) = \argmax_{a} (\inprod{\R}{h})_a$.
If $|f(h)| >1$, there is a tie, which we assume are broken uniformly at random, as is common in the literature (cf.~\citep{boutilier2012optimal}).
Given a report set $\R$ and $n$ voters, let $f_a = \{h \in \mathbb{N}^R \mid a \in f(h), \|h\|_1 = n\}$, which is the set of histograms which lead to alternative $a$ being selected by voting rule $f$.

\begin{example}[Determining the winner of an election]
    Suppose $m = 3$ and fix $n$.
    For alternative $a$, the set of histograms selecting alternative $a$ according to Plurality voting, $\fplurality_a = \{h \in \mathbb{N}^m \mid h_{(1,0,0)} \geq h_{(0,1,0)} \text{ and } h_{(1,0,0)} \geq h_{(0,0,1)}\}$ is the set of histograms where $a$ receives more votes than $b$ and $c$.
    If the voting rule is veto, then the histograms selecting alternative $a$ $\fveto_a = \{h \in \mathbb{N}^R \mid h_{(1,0,1)} \geq h_{(0,1,1)} \text{ and } h_{(1,1,0)} \geq h_{(0,1,1)} \}$ are those that veto $a$ less than $b$ and $c$.
\end{example}

\section{Technical tools to analyze how voting outcomes change}
In order to understand the effect of anchoring information on voting outcomes, we need two technical tools: first, we need to derive bounds on the probability any fixed alternative is selected when voters submit reports according to \emph{standard preferences}, so that we can later juxtapose these bounds with anchored preferences (\S~\ref{sec:sw-standard}).
This tool is used in \S~\ref{sec:sw-social-tighten} in conjunction with the mechanism in \S~\ref{sec:elicit-social} in order to show that both upper and lower bounds on probability of the alternative $\astar$ most preferred by $w$ is selected increase.
The second technical tool we need is an intermediary mechanism that elicits anchored preferences (\S~\ref{sec:elicit-social}).
This mechanism modifies the report scores $\R$ into a \emph{hypothetical set of scores} $\M$ so that voting under $u$ with options in $\M$ always yields the same vote as voting under $(1-\alpha) u + \alpha w$ with options in $\R$.
This enables us to examine and compare the utility profiles $u$ with the submitted report changes with the introduction of anchoring information.

\subsection{Bounding the probability alternative $a$ is selected via balls-and-bins}\label{sec:sw-standard}
We first examine the probability of different outcomes being reached by a few standard voting rules.
We do this by bounding the probability any given alternative $a$ is selected as the outcome. 
If, for a fixed voting rule $f$, there is some function $c : \mathbb{N} \to \mathbb{N}$ such that, for all alternatives $a \in [m]$, there exists a set of reports $Q_a \subseteq \R$ such that at least $c(n)$ reports belonging to the set $Q_a$ implies that $a$ selected by the voting mechanism.
That is, $\sum_{r \in Q_a} h_r > c(n) \implies f(h) = \{a\}$ is sufficient for a vote to be determined solely by the finite set $Q_a$ of reports, rather than searching over the entire space of votes.
(For example, if $c(n) = \frac n 2$, this reduces to the majority criterion~\citep[p. 266]{pennock1977due}.)
In particular, with the report set $\Rborda$, we often use the set $Q^\sigma_a := \{r \in \Rborda : a \text{ is highest ranked alternative}\}$.

With this sufficient condition, we bound the probability alternative $a$ is selected by a reduction to the nonasymptotic balls and bins setting\footnote{Throwing $n$ balls into $R$ bins that land in each bin with probability according to $p$, what is the probability the max-loaded bin has $\geq k$ balls inside? e.g., \citep{mitzenmacher2001power}} studied by~\citet{schultegeers2022balls}. In \S~\ref{sec:sw-social-tighten}, we show that if voters have anchored preferences, these lower bounds tighten for $w$'s most preferred alternative $\{\astar\} = \argmax_{a'}w_{a'}$, which demonstrates the influence of the anchoring point $w$.
Throughout this section, we let $p := \left\{\mu(\Gamma_r)\right\}_r$ be the probability distribution over coarse reports $\R$.

\paragraph{Plurality}
First, we use the fact that plurality satisfies the majority criterion as a sufficient condition for an alternative $a$ to be selected by $\fplurality$, which we reduce to balls and bins.
Moreover, by the Mean Value Theorem, we know that some alternative must receive $n / m$ votes in Plurality voting, with which we obtain upper bounds on the probability of any alternative $a$ winning an election.

\begin{propositionE}[][end,restate]\label{prop:plurality-standard-lb}
Given a density $\mu$, consider $p~:=~\left\{\mu(\Gamma_r)\right\}_r$.
    For any $a \in [m]$, we have 
    \begin{multline*}
        \min\left({n \choose \frac{n}{m}}\|p_a\|^{n/m}_{n/m}, 1 - \sum_{b \neq a} \left(\Pr[Binom(n, p_b) > n/2]\right) \right)\\ \geq \Pr_{\bld u}[\fplurality(h) = a] 
        \geq \Pr[Binom(n, p_a) \geq n/2].
    \end{multline*}
\end{propositionE}
\begin{proofE}
For the lower bound,
    \begin{align*}
        \Pr_{\bld u}[\fplurality(h) = a] &\geq \Pr_{\bld u}[h_a > n/2] + \frac 1 2 \Pr_{\bld u}[h_a = n/2] & \text{Majority criterion} \\
        &\geq \Pr_p[Binom(n, p_a) > n/2] + \frac 1 2 \Pr_p[Binom(n, p_a) = n/2] & \text{\citep{schultegeers2022balls}}~.
    \end{align*}

    For the upper bound, observe that the second term is simply $1-$ the probability any other alternative is selected.

    The first term leverages the mean value theorem.
    In plurality voting, precisely $n$ points are distributed among the $m$ alternatives, so some alternative must receive at least $\frac n m$ votes.
    This also reduces to the nonasymptotic balls-and-bins, and we leverage the upper bound from~\citet[Theorem 1]{schultegeers2022balls}.
\end{proofE}
Proposition~\ref{prop:plurality-standard-lb} gives both upper and lower bounds on the probability that the Plurality voting rule selects alternative $a$.
Lower bounds are determined by the probability a Binomial sample of $n$ votes returns at least $c(n) := n/2$ ``yes'' results, and the upper bound is given by either the probability some other alternative receives at least $n/2$ votes or the probability that alternative $a$ receives at least $\frac n m$ votes; a necessary condition for winning the election.

\paragraph{Borda}
We derive a similar bound for Borda, though this bound is strictly looser than the bound for Plurality, as the number of votes required for the sufficient condition is $c(n) = \frac{n(m-1)}{m}$: strictly greater than the $\frac n 2$ required for Plurality.
For large $m$, this bound loses its meaning as $c(n) \overset{m \to \infty}{\to} n$, but it is helpful for small $m$.

\begin{proposition}\label{prop:borda-standard-lb}
    For any $a \in [m]$, density $\mu$, and $c(n) := \frac{n(m-1)}{m}$, then 
    \begin{align*}
        1 - \sum_{b \neq a} \left(\Pr[Binom(n, p_{Q^\sigma_b}) > c(n)] \right) &\geq \Pr_{\bld u}[\fborda(h) = a] \\
        &\geq \Pr_{\bld u}[Binom(n, \|\sum_{r \in Q^\sigma_a}p_r\|_{c(n)}) > c(n)]~.
    \end{align*}
\end{proposition}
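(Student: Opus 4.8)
The plan is to follow the same template as Proposition~\ref{prop:plurality-standard-lb}: exhibit a ``sufficient set'' of reports $Q_a \subseteq \Rborda$ and a threshold $c(n)$ so that more than $c(n)$ reports landing in $Q_a$ forces alternative $a$ to win outright, then read the lower bound off as the probability that a single binomial exceeds $c(n)$ and the upper bound off as one minus the corresponding binomial lower bounds for the other alternatives. For Borda the natural choice is $Q_a = Q^\sigma_a$, the set of rankings placing $a$ first, together with the threshold $c(n) = \frac{n(m-1)}{m}$ stated in the proposition.

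First I would verify this sufficient condition. Let $k := \sum_{r \in Q^\sigma_a} h_r$ be the number of voters who rank $a$ first; each contributes a score of $m-1$ to $a$, so $a$'s Borda score is at least $k(m-1)$. For any $b \ne a$, among the $k$ voters ranking $a$ first the alternative $b$ cannot be first and so receives at most $m-2$ from each, while among the remaining $n-k$ voters $b$ receives at most $m-1$ from each; hence $b$'s score is at most $k(m-2) + (n-k)(m-1)$. A one-line rearrangement shows that $k(m-1) > k(m-2) + (n-k)(m-1)$ is equivalent to $km > n(m-1)$, i.e.\ to $k > c(n)$, so $k > c(n)$ implies that $a$ strictly outscores every other alternative and $\fborda(h) = \{a\}$ with no tie. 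This is precisely the ``$\sum_{r\in Q_a} h_r > c(n) \implies f(h) = \{a\}$'' criterion of \S~\ref{sec:sw-standard} specialized to Borda, and it also explains the remark that $c(n) \to n$ as $m \to \infty$.

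Given the sufficient condition, the binomial bounds follow as in Proposition~\ref{prop:plurality-standard-lb}. A voter submits a report in $Q^\sigma_a$ exactly when $u_i \in \bigcup_{r \in Q^\sigma_a} \Gamma_r$, an event of probability $p_{Q^\sigma_a} := \sum_{r \in Q^\sigma_a} p_r$ (distinct level sets overlap only on a $\mu$-null set), and since voters are i.i.d.\ the count $\sum_{r \in Q^\sigma_a} h_r$ is distributed as $Binom(n, p_{Q^\sigma_a})$; invoking the balls-and-bins reduction of~\citet{schultegeers2022balls} together with the sufficient condition gives $\Pr_{\bld u}[\fborda(h) = a] \ge \Pr[Binom(n, p_{Q^\sigma_a}) > c(n)]$, the lower bound. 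For the upper bound, after uniform tie-breaking $\fborda$ is single-valued, so the events $\{\fborda(h) = b\}$ partition the sample space; applying the sufficient condition to each $b \ne a$ gives $\Pr[\fborda(h) = b] \ge \Pr[Binom(n, p_{Q^\sigma_b}) > c(n)]$, and hence
\[
\Pr_{\bld u}[\fborda(h) = a] \;=\; 1 - \sum_{b \ne a} \Pr[\fborda(h) = b] \;\le\; 1 - \sum_{b \ne a} \Pr[Binom(n, p_{Q^\sigma_b}) > c(n)],
\]
where the right-hand side is automatically nonnegative because $\sum_{b \ne a} \Pr[\fborda(h) = b] \le 1$.

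The main obstacle is the sufficient-condition step: getting the slot-by-slot upper bound on a non-winning alternative's Borda score exactly right (the ``$m-2$ on $a$-first ballots, $m-1$ elsewhere'' split) and keeping every inequality strict, so that the conclusion is genuinely $f(h) = \{a\}$ with no tie. That strictness is what licenses the ``$> c(n)$'' in both binomial events and the disjointness of the winner events used in the upper bound; everything after it is a routine translation into the binomial language already set up for Proposition~\ref{prop:plurality-standard-lb}.
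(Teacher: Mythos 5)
Your proposal is correct and follows essentially the same route as the paper, which simply ports the Proposition~\ref{prop:plurality-standard-lb} template to Borda: a sufficient-condition-to-binomial reduction for the lower bound and ``one minus the other alternatives' binomial lower bounds'' for the upper bound. Your explicit verification that $k > \frac{n(m-1)}{m}$ first-place votes forces a strict Borda win is exactly the step the paper leaves implicit when it asserts $c(n) = \frac{n(m-1)}{m}$, so nothing is missing.
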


The proof follows in the same manner as the proof of Proposition~\ref{prop:plurality-standard-lb}.
Intuitively, because Borda rewards alternatives for being highly, but not highest, ranked.
This makes it easier for polarizing alternatives ``to lose'' an election than plurality, and Borda makes it easier ``to win'' by generally appealing to most voters for less polarizing alternatives.
This means the bounds are weaker than those given by plurality since these sufficient conditions are on ``winning'' and election, rather than ``not losing.''

\paragraph{Other voting rules}
Like Plurality, Copeland and Instant Runoff Voting (IRV) also satisfy the majority criterion.
Therefore, if, for all $a$, there is a set $Q_a \subseteq \Rborda$ such that $\sum_{r \in Q_a} h_r > \frac n 2 := c(n)$, then $f^{Copeland}(h) = f^{IRV}(h) = \{a\}$, then similar bounds can be derived.

Above, we give bounds based on sufficient conditions for an alternative to be declared the winner.
However, in Veto voting, we can intuitively think of the winner as being the ``least disliked'' alternative, which does not lend itself to (nontrivial) sufficient conditions that are independent of other alternatives.

\subsection{Eliciting anchored preferences}\label{sec:elicit-social}

In practice, algorithm designers do not know the density of voters' utilities $\mu$, and instead approximate the probability a voter submits report $r$ by considering the measure of the report set $\mu(\Gamma_r)~=~\int_{\Gamma_r}~d\mu~\approx \frac{|\{i : \Gamma(u_i) = r\}| }{n}$ with only access to $n$ reports $\{\Gamma(u_i)\}_{i \in [n]}$ instead of cardinal utilities $\{u_i\}_{i \in [n]}$.
This yields a more practical representation of the density $\mu$, but sacrifices the granular information needed to understand the impacts of the anchoring point $w$.

In Theorem~\ref{thm:elicit-social} show there is another ``scoring menu'' $\M$ inducing a new property $\Gaw$ and bijection $\phi : \R \to \M$ such that $\M$ simulates the role of anchoring information, demonstrated with the equivalence $(1-\alpha) u + \alpha w \in \Gamma_r \iff u \in \Gaw_{\phi(r)}$ for all $r \in \R$.
That is, for every report $r \in \R$, an anchored vote with the standard mechanism yields the report $r$ if and only if a non-anchored vote under the perturbed menu also produces vote $r$.
This (non-anonymous) scoring rule emulates each voter moving $\alpha$ towards the anchoring point $w$ without eliciting their full cardinal utility $u_i$.
We elicit slightly more granular information than standard ordinal preferences, but do not require full cardinal preferences.

The proof of Theorem~\ref{thm:elicit-social} is inspired from previous results on property elicitation~\citep{lambert2009eliciting,lambert2018elicitation}.

\begin{theoremE}[][normal]\label{thm:elicit-social}
    Given a scoring menu $\R$ inducing $\GammaR$, a voter with anchored preferences (parameterized by $w, \alpha$) will vote the same way under the property $\Gaw$ induced by the scoring menu $\M := \{\frac{r - \alpha w}{1-\alpha}\}_{r}$.
    That is, $(1-\alpha)u + \alpha w \in \Gamma_r \iff u \in \Gaw_{\phi(r)}$ for all $r \in \R$, where $\phi : r \mapsto \frac{r-\alpha w}{1-\alpha}$.
\end{theoremE}
\begin{proofE}
    Let $v = (1-\alpha)u + \alpha w$.
    For a score $r$, we have
    \begin{align*}
        v \in \Gamma_r &\iff d(v, r) \leq d(v, {r'}) \qquad \forall r' \neq r \\
        &\iff \|r - v\|^2 \leq \|r' - v\|^2 \qquad \forall r' \neq r \\
        &\iff \|\frac{r - \alpha w}{1- \alpha} - u\|^2 \leq \|\frac{r' - \alpha w}{1-\alpha} - u\|^2 \qquad \forall r' \neq r \\
        &\iff \|u - \frac{{r} - \alpha w}{1-\alpha}\|^2\leq \|u - \frac{{r'} - \alpha w}{1-\alpha}\|^2 \qquad \forall r' \neq r\\
        &\iff u \in \Gaw_{\phi(r)}~.
    \end{align*}
\end{proofE}
Theorem~\ref{thm:elicit-social} shows that we can use $\M$ as a set of scoring vectors which simulates the addition of external information.
This allows us to conceptually replace the shift in votes with a shift in the voting rule, and avoid trying to compute a distribution shift for a distribution we do not even know.
We benefit as we can now equivalently use the histogram of standard utilities when voting according to $\M$, the histograms $\#\Gaw(\bld u)= \#\Gamma^{\phi(\R)}(\bld u)$, which is equal to the histogram of anchored preferences according to standard scores $\#\GammaR((1-\alpha) \bld u + \alpha w)$ by Theorem~\ref{thm:elicit-social}.
Deriving $\M$ then allows us to empirically estimate the distribution $q := \{\mu(\Gaw_s)\}_{s \in \M}$ when $\mu$ is not known exactly, and does not require eliciting voters' cardinal preferences.
In the sequel, we slightly abuse notation and let $\Gaw_r$ denote what is technically $\Gaw_{\phi(r)}$, which is the set of utilities which would lead voters to submit report $r$ under \emph{anchored preferences}.
Since $\phi$ is defined by $\alpha$, the mechanism designer is the entity imparting the value of anchored preferences $\alpha$ rather than the voters.\footnote{If $\alpha$ varies for each voter and menus were customized, we can no longer use the assumption that voters are identically distributed. Intuitively, we think of $\alpha$ as an average of individual $\alpha_i$s if $\alpha_i$ is drawn from a single-peaked and roughly symmetric distribution.}

\subsubsection{Anchored votes align more closely with $w$ on the individual level}\label{sec:individual-social-w}
We now ask, if one elicits anchored preferences, how might an individual's vote change, given a general anchoring point $w \in \simplex$.
Perhaps unsurprisingly, we show that individual votes only move to reports that ``better align'' with $w$ in this model.

\begin{propositionE}[][end,restate]\label{prop:move-up-sw}
Fix $w \in \simplex$ and $\alpha \in [0,1)$, and consider the function $\phi : r \mapsto \frac{r - \alpha w}{1-\alpha}$.
    For any two reports $s,t \in \R$, for voter utility $u \in \simplex$ such that $d(u,s) \leq d(u,t)$ and $\langle w, s \rangle \geq ]langle w, t \rangle$, then $d(u, \phi(s)) \leq d(u, \phi(t))$.
\end{propositionE}
\begin{proofE}
    It suffices to show the inequality on squared distance.
Observe that we can re-write the squared distance $d(u, \phi(s))^2 =  d(u, \frac s {1-\alpha})^2 + \sum_{i \in [m]} \frac{\alpha w_i}{1-\alpha} (\frac{\alpha w_i + 2 u_i -2 s_i}{1-\alpha})$.
    \begin{align*}
        \sum_{i \in [m]} w_i (t_i - s_i) &\leq 0  \\
        \sum_{i \in [m]} \frac{\alpha w_i}{1-\alpha} \left(\frac{\alpha w_i + 2u_i -2 s_i}{1-\alpha} - \frac{\alpha w_i +2u_i-2 t_i}{1-\alpha}\right) &\leq 0 \\
        \sum_{i \in [m]} \frac{\alpha w_i}{1-\alpha} \left(\frac{\alpha w_i + 2u_i -2 s_i}{1-\alpha}\right) &\leq \sum_{i \in [m]} \frac{\alpha w_i}{1-\alpha} \left(\frac{\alpha w_i +2u_i-2 t_i}{1-\alpha}\right) \\ 
        \implies d\left(u, \frac{s}{1-\alpha}\right)^2 + \sum_{i \in [m]} \frac{\alpha w_i}{1-\alpha} \left(\frac{\alpha w_i + 2u_i -2 s_i}{1-\alpha}\right) &\leq d\left(u, \frac{t}{1-\alpha}\right)^2 + \sum_{i \in [m]} \frac{\alpha w_i}{1-\alpha} \left(\frac{\alpha w_i + 2u_i -2 t_i}{1-\alpha}\right) \\
        d(u, \phi(s))^2 &\leq d(u, \phi(t))^2
    \end{align*}
\end{proofE}
Proposition~\ref{prop:move-up-sw} suggests that reports ``aligning with $w$'' increase in probability: if an agent's vote moves from $s$ to $t$, then $w$ ``prefers $s$ to $t$'', given by $\langle w, s \rangle < \langle w, t \rangle$.
Wquivalently, no player will move from $s$ to $t$ if  $w$ prefers $s$ to $t$, again given $\inprod w s > \inprod w t$.
This phenomena is demonstrated in Figure~\ref{fig:plurality-social}, where the red cell, corresponding to the vote for alternative $a$, which is most preferred by $w$, is bigger under anchored voting. 
Proposition~\ref{prop:move-up-sw} immediately implies Corollary~\ref{cor:preserve-order-w}, which states that the set of cardinal utilities yielding the report $r^\star$ that ``would be submitted by $w$'' strictly increases.

\begin{corollary}\label{cor:preserve-order-w}
    Let an anchoring point $w \in \simplex$ and weight $\alpha \in (0,1]$ be given and consider the report $\{r^\star\} = \argmax_{r' \in \R} \langle w, r' \rangle$. 
    Then $\GammaR_{r^\star} \subsetneq \Gaw_{r^\star}$.
\end{corollary}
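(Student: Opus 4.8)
The plan is to establish the two containments separately: $\GammaR_{r^\star}\subseteq\Gaw_{r^\star}$ (which is essentially Proposition~\ref{prop:move-up-sw}) and then $\Gaw_{r^\star}\setminus\GammaR_{r^\star}\neq\emptyset$, the latter by a contraction‑mapping argument. Throughout I write $T(u):=(1-\alpha)u+\alpha w$ and treat the main case $\alpha\in(0,1)$, in which $T$ is an affine contraction of $\simplex$ into itself whose unique fixed point is $w$ and which satisfies $T^k(u)=(1-\alpha)^k u+\bigl(1-(1-\alpha)^k\bigr)w\to w$ for every $u\in\simplex$. The key reformulation is that Theorem~\ref{thm:elicit-social}, under the notational convention $\Gaw_{r^\star}:=\Gaw_{\phi(r^\star)}$, says exactly
\[
 \Gaw_{r^\star}\;=\;\{\,u\in\simplex : T(u)\in\GammaR_{r^\star}\,\},
\]
so that $\GammaR_{r^\star}\subseteq\Gaw_{r^\star}$ is equivalent to $T(\GammaR_{r^\star})\subseteq\GammaR_{r^\star}$, and strictness is equivalent to $\Gaw_{r^\star}\setminus\GammaR_{r^\star}\neq\emptyset$.

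For the inclusion: if $u\in\GammaR_{r^\star}$ then $d(u,r^\star)\le d(u,t)$ for every $t\in\R$, while $\langle w,r^\star\rangle\ge\langle w,t\rangle$ by the definition of $r^\star$; applying Proposition~\ref{prop:move-up-sw} with $s=r^\star$ and each $t\in\R$ gives $d(u,\phi(r^\star))\le d(u,\phi(t))$ for all $t$, i.e. $u\in\Gaw_{r^\star}$. For strictness, suppose for contradiction that $\GammaR_{r^\star}=\Gaw_{r^\star}$; then for every $u\in\simplex$ we have $u\notin\GammaR_{r^\star}\Rightarrow T(u)\notin\GammaR_{r^\star}$, hence by induction $T^k(u)\notin\GammaR_{r^\star}$ for all $k\ge1$. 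Now I would invoke two mild non‑degeneracy facts about the relevant report sets $\Rplurality$ and $\Rborda$, and more generally any $\R$ all of whose elements share a Euclidean norm (so that each level set has the half‑space form $\GammaR_r=\{u\in\simplex:\langle u,r\rangle\ge\langle u,r'\rangle\ \forall r'\in\R\}$): first, since the elements of $\R$ also share an $\ell_1$‑norm (they lie in $c\simplex$), for any $t\neq r^\star$ the vector $t-r^\star$ has a strictly positive coordinate $j$, so the vertex $e_j$ satisfies $\langle e_j,t\rangle>\langle e_j,r^\star\rangle$ and thus $e_j\notin\GammaR_{r^\star}$, giving $\GammaR_{r^\star}\subsetneq\simplex$ whenever $R\ge2$; second, because $\{r^\star\}$ is the \emph{unique} maximizer of $\langle w,\cdot\rangle$, the point $w$ satisfies all of the finitely many defining inequalities of $\GammaR_{r^\star}$ strictly, so $w$ lies in the interior of $\GammaR_{r^\star}$ relative to the affine hull of $\simplex$. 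Choosing any $u\in\simplex\setminus\GammaR_{r^\star}$ (nonempty by the first fact), the sequence $T^k(u)$ stays outside $\GammaR_{r^\star}$ for all $k$ yet converges to $w$, an interior point of $\GammaR_{r^\star}$ — a contradiction. Hence $\GammaR_{r^\star}\subsetneq\Gaw_{r^\star}$.

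I expect the strictness step to be the main obstacle, and within it the second non‑degeneracy fact specifically: that the anchoring point $w$ sits in the \emph{interior} of its own level set is what makes the contraction argument bite, and establishing it genuinely uses both the uniqueness of the $\langle w,\cdot\rangle$‑maximizer and the equal‑norm structure of $\R$ — absent the latter, a report with large norm but small inner product with $w$ could be the unique ``linear'' maximizer yet fail to be the Euclidean‑nearest report to $w$. An alternative, more computational route to strictness avoids the iteration: perturb a point on the shared boundary of $\GammaR_{r^\star}$ and a neighboring cell $\GammaR_{t_0}$ an infinitesimal amount into $\GammaR_{t_0}$; in the expansion of squared distances used to prove Proposition~\ref{prop:move-up-sw}, this perturbed point acquires only a vanishingly small distance advantage toward $t_0$ but retains the fixed $w$‑dependent term $\tfrac{2\alpha}{(1-\alpha)^2}\bigl(\langle w,t_0\rangle-\langle w,r^\star\rangle\bigr)<0$ pulling it toward $\phi(r^\star)$, so it lands in $\Gaw_{r^\star}\setminus\GammaR_{r^\star}$. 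Finally, the endpoint $\alpha=1$ (where the menu $\M$ is literally undefined) is the degenerate case in which every voter votes exactly as $w$ does; interpreting $\Gaw_{r^\star}$ as the corresponding limit, it equals all of $\simplex$, which still strictly contains $\GammaR_{r^\star}$ by the first fact above.
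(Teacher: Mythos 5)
Your argument is correct, and it actually does more than the paper does. The paper treats this corollary as an immediate consequence of Proposition~\ref{prop:move-up-sw}, which is exactly your first step: taking $s=r^\star$ and letting $t$ range over $\R$ gives the weak inclusion $\GammaR_{r^\star}\subseteq\Gaw_{r^\star}$. But Proposition~\ref{prop:move-up-sw} alone only yields $\subseteq$; the paper never separately argues the \emph{strictness} asserted by the corollary. Your contraction/fixed-point argument (iterate $T(u)=(1-\alpha)u+\alpha w$, use $T^k(u)\to w$, and derive a contradiction from $w$ lying in the relative interior of $\GammaR_{r^\star}$) supplies that missing piece, and your boundary-perturbation alternative — using the fixed advantage $\tfrac{2\alpha}{(1-\alpha)^2}\bigl(\langle w,t_0\rangle-\langle w,r^\star\rangle\bigr)<0$ in the squared-distance expansion — is an equally valid, more local route to the same point. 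What your approach buys, beyond completeness, is an explicit accounting of what strictness actually requires: uniqueness of the maximizer $r^\star$ (so $w$ is interior to its own cell), $R\ge 2$ (so the cell is proper), and crucially the equal-Euclidean-norm structure of the menu, which is what makes the cells coincide with the linear-maximizer regions. That caveat is well placed: for a general $\R\subset c\simplex$ with unequal $\ell_2$ norms, the unique $\langle w,\cdot\rangle$-maximizer need not be the report nearest to $w$, and one can then make $\Gaw_{r^\star}$ empty while $\GammaR_{r^\star}$ is not (indeed the "$\implies$" step in the paper's proof of Proposition~\ref{prop:move-up-sw} itself implicitly uses $\|s\|=\|t\|$), so your restriction to $\Rplurality$, $\Rborda$, and equal-norm menus is not a loss but a clarification of the hypotheses under which the statement is true. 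Your separate handling of the degenerate endpoint $\alpha=1$, where $\phi$ is undefined, is also a reasonable reading of the paper's slightly inconsistent ranges for $\alpha$.
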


\begin{figure}[h]
\begin{minipage}{0.45\linewidth}
    \centering
	\includegraphics[width=\linewidth]{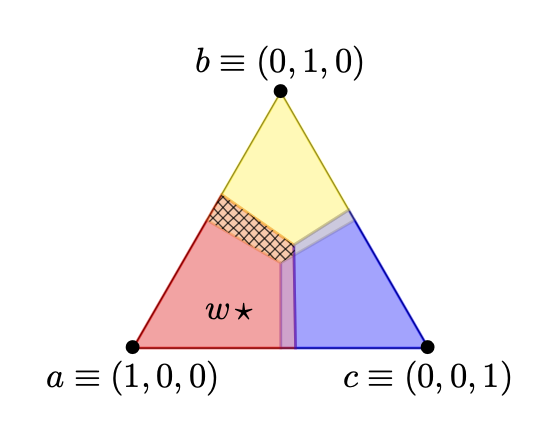}    
    \end{minipage}
    \hfill
    \begin{minipage}{0.48\linewidth}
    \caption{Anchoring Plurality votes in the point $w$ yields a menu $\M$ with the level sets of $\Gaw$ laid over the level sets of $\GammaR$. Utilities in red and yellow intersection (hatched) will vote for candidate $b$ under standard preferences and candidate $a$ with the anchoring information. They are close to indifferent between $a$ and $b$ under standard preferences.}
\label{fig:plurality-social}
    \end{minipage}
\end{figure}

Under the common \emph{impartial culture} assumption that $\mu$ is the uniform density $m$, this further implies that the probability of any anchored preference is ordered according to $w$.
(Recall that we do not generally requite the impartial culture assumption, unlike much of the previous literature.)

\begin{corollary}\label{cor:preserve-order-w-uniform}
    Let $f$ be an anonymous voting rule.
    If $m$ is the uniform density, then for all $s,t \in \R$, the probability an agent reports $s$ is greater than the probability they report $t$ if and only if $w$ prefers $s$ to $t$. 
    That is, $m(\Gaw_{s}) \geq m(\Gaw_t) \iff \langle w, s \rangle \geq \langle w, t \rangle$. 
\end{corollary}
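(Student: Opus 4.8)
The plan is to recast the statement as a comparison of volumes and then use the symmetry of the uniform density. By Theorem~\ref{thm:elicit-social}, $u\in\Gaw_r$ precisely when $T(u):=(1-\alpha)u+\alpha w$ lies in $\GammaR_r$. The map $T$ is an affine bijection of $\simplex$ onto the shrunken simplex $S_w:=(1-\alpha)\simplex+\alpha w$ and scales $(m-1)$-dimensional volume by the constant $(1-\alpha)^{m-1}$, so it pushes the uniform density on $\simplex$ forward to the uniform density on $S_w$; hence
\[
  \mu(\Gaw_r)\;=\;\frac{\mathrm{vol}(\GammaR_r\cap S_w)}{\mathrm{vol}(S_w)}.
\]
For the menus at issue (such as $\Rplurality$ and $\Rborda$) $\R$ is a single orbit under coordinate permutations, so every $r\in\R$ has the same norm and therefore $\GammaR_r=\{v\in\simplex:\langle v,r\rangle\ge\langle v,r'\rangle\ \forall r'\in\R\}$ is the cell on which $r$ maximizes the inner product. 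The corollary thus says that a small simplex centered at $w$ meets the cells $\{\GammaR_r\}_{r\in\R}$ in the order induced by $\langle w,\cdot\rangle$.

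For the implication ``$\langle w,s\rangle\ge\langle w,t\rangle\Rightarrow\mu(\Gaw_s)\ge\mu(\Gaw_t)$'' I would first handle the transparent case and then reduce the rest to one monotonicity fact. In the Plurality case $s=e_i$, $t=e_j$ we have $\Gaw_{e_k}=\{u:u_k+\gamma w_k\ge u_\ell+\gamma w_\ell\ \forall\ell\}$ with $\gamma=\alpha/(1-\alpha)$, and the coordinate transposition $(i\,j)$ — which is $\mu$-preserving on $\simplex$ and fixes $\langle\cdot,e_\ell\rangle$ for all $\ell\neq i,j$ — carries every $u\in\Gaw_{e_j}$ into $\Gaw_{e_i}$, since on $\Gaw_{e_j}$ the inequality $u_j+\gamma w_j\ge u_i+\gamma w_i$ together with $w_i\ge w_j$ forces $u_j\ge u_i$; this injection gives $\mu(\Gaw_{e_j})\le\mu(\Gaw_{e_i})$. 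For a general anonymous menu I pick a coordinate permutation $\sigma$ with $\sigma(t)=s$; then $\sigma$ carries $\GammaR_t\cap S_w$ onto $\GammaR_s\cap S_{\sigma w}$, so it is enough to prove $\mathrm{vol}(\GammaR_s\cap S_w)\ge\mathrm{vol}(\GammaR_s\cap S_{\sigma w})$. These two shrunken simplices are translates, $S_{\sigma w}=S_w+\alpha(\sigma w-w)$, and the hypothesis becomes $\langle\sigma w-w,\,s\rangle=\langle w,t\rangle-\langle w,s\rangle\le 0$; since $\GammaR_s=C_s\cap\simplex$ with $C_s:=\{z:\langle z,s-r'\rangle\ge 0\ \forall r'\}$ is exactly the part of $\simplex$ best aligned with the direction $s$, sliding $S_w$ by a vector with nonpositive $s$-component should not increase its overlap with $\GammaR_s$. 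I expect to make this precise from the log-concavity (Pr\'ekopa--Leindler) of $y\mapsto\mathrm{vol}(\GammaR_s\cap(S_w+y))$ on the hyperplane $\{\sum_i y_i=0\}$ together with a description of its maximizer (along the axis of $C_s$), so that passing from $S_{\sigma w}$ back to $S_w$ is a move toward that maximizer.

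The reverse implication is the contrapositive: if $\langle w,s\rangle<\langle w,t\rangle$, the previous argument applied to $(t,s)$ gives $\mu(\Gaw_t)\ge\mu(\Gaw_s)$, and the inequality is strict — the offset $\alpha(\sigma w-w)$ has a negative $s$-component when $\langle w,t\rangle\neq\langle w,s\rangle$, and a positive-measure set of utilities switches its pairwise $s$-versus-$t$ preference toward $t$ (Proposition~\ref{prop:move-up-sw} together with the strict-inclusion phenomenon used for Corollary~\ref{cor:preserve-order-w}); so $\mu(\Gaw_s)\ge\mu(\Gaw_t)$ fails. Applying the forward implication in both directions when $\langle w,s\rangle=\langle w,t\rangle$ gives equality, and the two implications together are the biconditional. \textbf{The main obstacle} is the volume-monotonicity step in the forward direction: away from the Plurality case there is no isometry of $\simplex$ swapping $\GammaR_s$ with $\GammaR_t$ — the anchor $w$, and for $m\ge 3$ the menu $\R$ itself, break that symmetry — so one must genuinely show that sliding the shrunken simplex off $w$ in a direction with nonpositive inner product against $s$ cannot gain overlap with $\GammaR_s$, instead of reading it off a reflection.
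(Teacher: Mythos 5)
Your setup is fine — the identity $m(\Gaw_r)=\mathrm{vol}(\GammaR_r\cap S_w)/\mathrm{vol}(S_w)$ with $S_w=(1-\alpha)\simplex+\alpha w$ is correct, and your Plurality argument (the coordinate transposition injecting $\Gaw_{e_j}$ into $\Gaw_{e_i}$ when $w_i\ge w_j$) is complete. But the step you yourself flag as the main obstacle is a genuine gap, not a presentational one, and it is where the whole general case lives. You reduce to showing that translating $S_w$ by $\alpha(\sigma w-w)$, a vector with $\langle \sigma w-w,s\rangle\le 0$, cannot increase $\mathrm{vol}(\GammaR_s\cap(S_w+y))$. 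Pr\'ekopa--Leindler only gives log-concavity, hence unimodality, of this overlap function along a line; to get the monotonicity you need, you must locate the maximizer, and ``along the axis of $C_s$'' is asserted, not proved — the cell $\GammaR_s$ is an intersection of halfspaces with normals $s-r'$, $r'\in\R$, not a superlevel set of $\langle\cdot,s\rangle$, so a shift with nonpositive $s$-component can have large components along other facet normals and nothing you state rules out a gain in overlap. The difficulty is sharpest in the equality case $\langle w,s\rangle=\langle w,t\rangle$: there your argument needs the overlap to be exactly invariant under a generically nonzero translation, which log-concavity cannot deliver and which is essentially the corollary itself for that pair. The strictness needed for the ``only if'' direction (a positive-measure set of utilities that switches from $s$ to $t$) is likewise asserted rather than established. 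So as written the proposal proves the Plurality case and the reduction, but not the statement.

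The paper takes a different and much lighter route: it reads the corollary off Proposition~\ref{prop:move-up-sw} (if $\langle w,s\rangle\ge\langle w,t\rangle$, every utility that weakly prefers $s$ to $t$ under the standard menu still weakly prefers $\phi(s)$ to $\phi(t)$ under the anchored menu) together with Corollary~\ref{cor:preserve-order-w} and the permutation symmetry of the uniform density for an anonymous menu, applied pairwise to all $s,t$ — no translation-of-volumes lemma is invoked at all. If you want to salvage your geometric approach, the honest comparison is that your plan would, if completed, give a quantitative statement about how the cell measures vary with $w$, but it currently replaces the paper's symmetry-plus-pairwise-containment argument with an unproven (and, in the equality case, delicate) volume-monotonicity claim; I would either prove that claim or switch to the pairwise route the paper uses.
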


This implies that, when $\mu$ is uniform, incorporating external information anchored in $w$ leads to the probabilities of alternatives being elected to follow the same order as $w$ if we recursively apply this argument to all pairs of reports $s,t$.
While mathematically tractable, it is well-known that impartial culture assumptions are often unrealistic~\citep{lehtinen2007unrealistic}.

\section{Anchored preferences can increase the probability $w$'s preferred alternative is selected}\label{sec:sw-social-tighten}
Recall that in \S~\ref{sec:sw-standard} we reduce voting in expectation to a balls-to-bins problem to obtain lower and upper bounds on the probability an alternative $a$ is selected.
We naturally wonder if and when anchored preferences tighten these lower bounds or loosen the upper bounds.
We show in Corollary~\ref{cor:social-prefs-tighten-lb} that, for the alternative $\astar$ such that $\{\astar\} = \argmax_{a'}w_{a'}$, the lower bounds do indeed tighten when the voting rule is Plurality, and give conditions on $w$ that tighten the lower bounds for Borda.
Moreover, under certain smoothness conditions on $\mu$, we also show the upper bounds loosen (increase) in Corollary~\ref{cor:social-prefs-loosen-ub}.

Since the bounds in \S~\ref{sec:sw-standard} are given by the binomial distribution, we first study the monotonicity of the binomial distribution in the probability of a positive event.
We show in Lemma~\ref{lem:binom-inc} that if the probability of a positive instance increases, then the probability that a binomial sample has at least $k$ positive instances (meaning $r \in Q_a$) also increases.
For plurality, this suffices to tighten bounds, and in Lemma ~\ref{lem:suff-cond-borda-w}, we give a sufficient condition on $w$ for the probability that $r \in Q_{\astar}$ to increase with anchored preferences.

\begin{lemmaE}[Monotonicity of binomial bounds in $p$][end,restate,text link]\label{lem:binom-inc}
    Suppose $p \leq q$.
    Then for any $k \in [n]$, we have $\Pr[Binom(n, p) \geq k] \leq \Pr[Binom(n,q) \geq k]$.
\end{lemmaE}
\begin{proofE}
    It suffices to show $\frac d {dp}\Pr[Binom(n, p) \geq k] \geq 0$ for all $p \in [0,1]$.
    We recursively derive the closed form of $\frac d {dp}\Pr[Binom(n, p) \geq n - k] = (n-k) {n \choose k} p^{n-k-1} (1-p)^k$, which is nonnegative for all $p \in [0,1]$, yielding the result.
    In general, we have the form
    \begin{align*}
        \frac d {dp}\Pr[Binom(n, p) \geq n - k] 
        &= \sum_{\ell=0}^k\frac d {dp} \Pr[Binom(n,p) = n-\ell] \\
        &= \sum_{\ell = 0}^k {n \choose \ell} p^{n-\ell - 1} (1-p)^{\ell-1} (n-\ell - np)~.\\
    \end{align*}
    First, let us consider the base case: $k = 0$.
    \begin{align*}
        \frac d {dp}\Pr[Binom(n, p) \geq n] &=  p^{n- 1} (1-p)^{-1} n (1-p)\\
        &= n p^{n-1}\\
        &= (n-0) {n \choose 0} p^{n-1} (1-p)^0
    \end{align*}
    Now assume this holds for $k$. We show it holds for $k+1$.
    \begin{align*}
        & \frac d {dp}\Pr[Binom(n, p) \geq n-(k+1)] \\
        &= \frac d {dp}\Pr[Binom(n, p) \geq n-k] + \frac d {dp}\Pr[Binom(n, p) = n-(k+1)] \\
        &= \overbrace{{n \choose k}(n-k)p^{n-k-1} (1-p)^k} + \overbrace{{n \choose k+1}p^{n - (k+1)-1} (1-p)^k (n - (k+1) - np)} \\
        &= {n \choose k+1}p^{n-k-2} (1-p)^k (p (k+1)) + {n \choose k+1}p^{n - (k+1)-1} (1-p)^k (n - (k+1) - np) \\
        &= {n \choose k+1}p^{n- k- 2} (1-p)^{k} (p(k+1) + (n - k - 1 - np)) \\
        &= {n \choose k+1}p^{n- k- 2} (1-p)^{k} ((1-p) (n-k - 1)) \\
        &= {n \choose k+1}(n-k-1)p^{n- k- 2} (1-p)^{k+1} 
    \end{align*}
    
\end{proofE}
One corollary of Proposition~\ref{prop:move-up-sw} is that when the report set is the set of elementary basis vectors, $\R = \Rplurality$, then $\mu(\GammaR_{\astar}) \leq \mu(\Gaw_{\astar})$ for all $w$ and $\alpha$. In turn, Lemma~\ref{lem:binom-inc} tightens the lower bounds from Proposition~\ref{prop:borda-standard-lb} for all $w \in \simplex$. 

However, when $\R = \Rborda$, this is not necessarily the case.
For intuition, if $w$ is close to indifferent between two alternatives as its preferred, then there might be some report $s \not\in Q_{\astar}$ that does not rank $\astar$ as its top choice, yet because of utility over non-top outcomes, we might have $\inprod w s \geq \inprod w t$ for some $t \in Q_{\astar}$.
This opens up the possibility that the probability a voter ascribes the highest score to alternative $\astar$ decreases, i.e., $\sum_{r \in Q_{\astar}} \mu(\Gamma_r) > \sum_{r \in Q_{\astar}} \mu(\Gaw_r)$, as votes might move from $t$ to $s$.
In Lemma~\ref{lem:suff-cond-borda-w}, we give a sufficient condition on $w$ for the binomial lower bound to be tightened, illustrated with $m=3$ alternatives in Figure~\ref{fig:borda-w-tighten-condition}.
Intuitively, if $w$ favors one alternative enough, then the reports $r \in \R$ that most align with $w$ are those that assign the highest score to $a$.
In the sequel, let $k = (m-1)!$.

\begin{lemmaE}[Sufficient condition on $w$ for probability of a report in $Q^\sigma_{\astar}$ to increase][end,restate]\label{lem:suff-cond-borda-w}
    Let $\R = \Rborda$.
    If $Q^\sigma_{\astar} = \topk{k}(\langle w, r\rangle_{r \in \R})$, then $\sum_{r \in Q^\sigma_{\astar}}\mu(\Gaw_{\phi(r)}) \geq \sum_{r \in Q^\sigma_{\astar}}\mu(\GammaR_r)$, and in turn, $q_{Q^\sigma_{\astar}} \geq p_{Q^\sigma_{\astar}}$.
\end{lemmaE}
\begin{proofE}
    By Proposition~\ref{prop:move-up-sw}, for any $r \in Q_{\astar}$, if $u \not \in \Gaw_{\phi(r)}$, it must be in $\Gaw_{\phi(t)}$ for some other $t \in Q_{\astar}$.
    Observe
    \begin{align*}
        \cup_{r \in Q_{\astar}} \GammaR_r &= \cup_{r \in Q_{\astar}, s \in \R} (\GammaR_r \cap \Gaw_{\phi(s)}) & \text{disjoint sets} \\
        &= \cup_{r \in Q_{\astar}, s \in Q_{\astar}} (\GammaR_r \cap \Gaw_{\phi(s)}) &\text{if $s \not \in Q_{\astar}$, then $\GammaR_r \cap \Gaw_{\phi(s)} = \emptyset$} \\
        &\subseteq \cup_{s \in Q_{\astar}, r \in \R} (\GammaR_r \cap \Gaw_{\phi(s)}) & Q_{\astar} \subseteq \R \\
        &= \cup_{r \in Q_{\astar}} \Gaw_r & 
    \end{align*}
    Since $\cup_{r \in Q_{\astar}} \GammaR_r \subseteq \cup_{r \in Q_{\astar}} \Gaw_{\phi(r)}$, we have $\sum_{r \in Q_{\astar}} \mu(\GammaR_r) = \mu(\cup_{r \in Q_{\astar}} \GammaR_r) \leq \mu(\cup_{r \in Q_{\astar}} \Gaw_{\phi(r)}) = \sum_{r \in Q_{\astar}} \mu(\Gaw_{\phi(r)})$.
\end{proofE}
Lemma~\ref{lem:suff-cond-borda-w} shows that if reports assigning score $m-1$ to alternative $\astar$ are the top-$k$ preferred by $w$, then the probability that any one vote assigns score $m-1$ to alternative $a$ increases.

\begin{lemmaE}[][all end,restate]\label{lem:top-k-borda}
    Fix $w \in \simplex$.
    Let $\astar = \argmax_{a'} w_{a'}$.
    If $w_a \geq (m-1) w_{[2]} + \sum_{i=3}^m w_{[i]} (m - 2(i-i))$, then $Q_a = \topk{k}(\{\langle w, r\rangle\}_{r \in \R})$.
\end{lemmaE}
\begin{proofE}
For ease of exposition, let $x_{[i]}$ denote the $i^{th}$ largest element of $x$.
    It suffices to show 
    \begin{align*}
        \min_{r \in Q_a} \inprod w r &\geq \max_{s \not \in Q_a} \inprod w s \\
        w_a (m-1) + \sum_{i=2}^m w_{[i]} (i-2) &\geq w_a (m-2) + w_{[2]} (m-1) + \sum_{i=3}^m w_{[i]} (m-i) \\
        w_a &\geq w_{[2]}(m-1) + \sum_{i=3}^m w_{[i]} (m-2i + 2)~,
    \end{align*}
    which concludes the proof.
\end{proofE}
With $m=3$, Lemma~\ref{lem:suff-cond-borda-w} shows that it suffices for $w_b \leq 1/3$ for all $b \neq a$ in order for the probability that alternative $\astar$ is the top-ranked to increase.
Lemma~\ref{lem:top-k-borda} in \S~\ref{app:omitted-proofs} characterizes $w$ such that $q_{Q^{\sigma}_{\astar}} \geq p_{Q^{\sigma}_{\astar}}$, the condition of which is captured by Assumption~\ref{assum:w-inc-borda}.
Throughout, we let $w_{[i]}$ denote the $i^{th}$ largest component of the vector $w$.

\begin{figure}
    \centering
    \begin{minipage}{0.4\linewidth}
    \includegraphics[width=\linewidth]{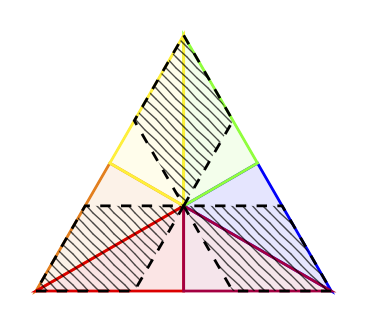}
    \end{minipage}
    \hfill
    \begin{minipage}{0.5\linewidth}
    \caption{Hatched regions subsets of $\Delta_3$ where any $w$ in the hatched region increases ensures that reports where $\astar$ is the favorite are precisely the top-$(m-1)!$ most preferred.  
    In turn, for any $w$ in the hatched regions, the bounds on the probability that $w$'s most favored alternative is selected increase.
}
    \label{fig:borda-w-tighten-condition}
    \end{minipage}
\end{figure}

\begin{assumption}\label{assum:w-inc-borda}
    For $m \geq 3$, let $w$ be an anchoring point such that $w_{[1]} \geq w_{[2]}(m-1) + \sum_{i=3}^m w_{[i]} (m-2i + 2)$.
\end{assumption}

In essence, we assume the anchoring point $w$ favors one alternative over others strongly enough.
If this is true, the probability that a voter ranks that alternative the highest increases. 
This now enables us to observe that the lower bounds derived in \S~\ref{sec:sw-standard} tighten under anchored preferences. 

\begin{corollary}\label{cor:social-prefs-tighten-lb}
    Let $w$ be an anchoring point with $\{\astar\} = \argmax_{a'}w_{a'}$.
    The lower bound given by Proposition~\ref{prop:plurality-standard-lb} tightens for the alternative $\astar$.
    Moreover, if $\R = \Rborda$ and $w$ satisfies Assumption~\ref{assum:w-inc-borda}, then the lower bound from Proposition~\ref{prop:borda-standard-lb} tightens.
\end{corollary}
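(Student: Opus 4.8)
The plan is to assemble the corollary directly from the monotonicity lemma (Lemma~\ref{lem:binom-inc}) together with the two comparisons of level-set measures established earlier. First I would recall that both lower bounds in Propositions~\ref{prop:plurality-standard-lb} and \ref{prop:borda-standard-lb} have the form $\Pr[\mathrm{Binom}(n,\pi)\ge c(n)]$ for a probability $\pi$ that is exactly the measure of the relevant event under the \emph{standard} distribution $p$: for Plurality, $\pi=p_{\astar}=\mu(\GammaR_{\astar})$ with $c(n)=n/2$, and for Borda, $\pi=\|\sum_{r\in Q^\sigma_{\astar}}p_r\|_{c(n)}$ with $c(n)=\tfrac{n(m-1)}{m}$ — I will treat the norm there as the $\ell_{c(n)}$-pseudonorm that is itself monotone in each coordinate $p_r$. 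Passing to anchored preferences replaces $p$ by $q:=\{\mu(\Gaw_r)\}_r$, so the lower bound becomes the same binomial tail evaluated at the anchored probability $q_{\astar}$ (resp.\ $q_{Q^\sigma_{\astar}}$). Thus it suffices to show that the relevant probability weakly increases under anchoring, and then invoke Lemma~\ref{lem:binom-inc} to conclude the binomial tail weakly increases, i.e.\ the lower bound tightens.

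For the Plurality case I would use the observation already stated in the text: when $\R=\Rplurality$, the report $\astar$ maximizing $\langle w,r\rangle$ over $\R$ is the basis vector $e_{\astar}$ (since $\langle w,e_i\rangle=w_i$ and $\astar=\argmax_i w_i$), so Corollary~\ref{cor:preserve-order-w} gives $\GammaR_{\astar}\subsetneq\Gaw_{\astar}$, hence $p_{\astar}=\mu(\GammaR_{\astar})\le\mu(\Gaw_{\astar})=q_{\astar}$. Applying Lemma~\ref{lem:binom-inc} with $k=n/2$ then yields $\Pr[\mathrm{Binom}(n,p_{\astar})\ge n/2]\le\Pr[\mathrm{Binom}(n,q_{\astar})\ge n/2]$, which is the claimed tightening. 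For the Borda case, Assumption~\ref{assum:w-inc-borda} is precisely the hypothesis of Lemma~\ref{lem:top-k-borda}, which gives $Q^\sigma_{\astar}=\topk{k}(\{\langle w,r\rangle\}_{r\in\R})$; feeding this into Lemma~\ref{lem:suff-cond-borda-w} gives $\sum_{r\in Q^\sigma_{\astar}}\mu(\Gaw_{\phi(r)})\ge\sum_{r\in Q^\sigma_{\astar}}\mu(\GammaR_r)$, i.e.\ $q_{Q^\sigma_{\astar}}\ge p_{Q^\sigma_{\astar}}$ coordinatewise. Monotonicity of the $\ell_{c(n)}$-pseudonorm in each coordinate then gives $\|\sum_{r\in Q^\sigma_{\astar}}p_r\|_{c(n)}\le\|\sum_{r\in Q^\sigma_{\astar}}q_r\|_{c(n)}$, and a final application of Lemma~\ref{lem:binom-inc} with $k=c(n)$ finishes the Borda case.

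The only mild subtlety — and what I would flag as the main thing to get right rather than a genuine obstacle — is reconciling the norm notation in Proposition~\ref{prop:borda-standard-lb}: the lower bound is written with $\|\cdot\|_{c(n)}$ applied to (a sum of) coordinates of $p$, and I need this quantity to be monotone under the coordinatewise inequality $q\ge p$ delivered by Lemma~\ref{lem:suff-cond-borda-w}. Since $\ell_\rho$-pseudonorms of nonnegative vectors are increasing in each coordinate for any $\rho>0$, this goes through; I would state this monotonicity explicitly as a one-line observation so the chain from Lemma~\ref{lem:suff-cond-borda-w} to Lemma~\ref{lem:binom-inc} is airtight. Everything else is bookkeeping: identify the binomial parameter, show it increases, apply Lemma~\ref{lem:binom-inc}.
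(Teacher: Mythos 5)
Your proposal is correct and follows essentially the same route the paper intends: for Plurality, Proposition~\ref{prop:move-up-sw}/Corollary~\ref{cor:preserve-order-w} give $p_{\astar}\le q_{\astar}$ and Lemma~\ref{lem:binom-inc} does the rest; for Borda, Assumption~\ref{assum:w-inc-borda} feeds Lemma~\ref{lem:top-k-borda} into Lemma~\ref{lem:suff-cond-borda-w} to get $q_{Q^\sigma_{\astar}}\ge p_{Q^\sigma_{\astar}}$, and Lemma~\ref{lem:binom-inc} again closes the argument. One small remark: Lemma~\ref{lem:suff-cond-borda-w} gives the inequality for the \emph{aggregate} probability $\sum_{r\in Q^\sigma_{\astar}}\mu(\cdot)$ rather than coordinatewise, but that aggregate is exactly the binomial parameter in Proposition~\ref{prop:borda-standard-lb}, so your chain goes through unchanged.
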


Corollary~\ref{cor:social-prefs-tighten-lb} shows that anchored preferences can, but do not necessarily, tighten the lower bounds on the probability of a voting rule selects the alternative $\astar$ most preferred by $w$.
Moreover, the upper bound loosens if $\mu$ is sufficiently smooth.

\begin{corollary}\label{cor:social-prefs-loosen-ub}
    Let $w$ be an anchoring point such that ${\astar} = \argmax_{a'}w_{a'}$.
    Suppose $\astar$ is the only alternative that increases in probability (i.e., $p_{Q_b} \geq q_{Q_b}$ for all $b \neq \astar$). 
    Then the upper bounds in Propositions~\ref{prop:plurality-standard-lb} and \ref{prop:borda-standard-lb} loosen.
\end{corollary}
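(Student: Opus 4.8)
The plan is to read ``the upper bound loosens'' as: the numerical value of the bound weakly increases when the report distribution $p = \{\mu(\GammaR_r)\}_r$ is replaced by the anchored distribution $q = \{\mu(\Gaw_r)\}_r$. Both upper bounds are assembled from terms of just two shapes, so I would show each shape is monotone non-decreasing under $p \mapsto q$ and then combine.

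First I would handle the term $1 - \sum_{b \neq \astar}\Pr[Binom(n, p_{Q_b}) > c(n)]$, which is the entire Borda upper bound (with $Q_b = Q^\sigma_b$, $c(n) = \frac{n(m-1)}{m}$) and also the second argument of the $\min$ in the Plurality bound (with $Q_b = \{e_b\}$, $c(n) = \frac n 2$). The corollary's hypothesis is precisely that $q_{Q_b} \leq p_{Q_b}$ for every $b \neq \astar$. Rewriting the event $\{\,\cdot > c(n)\,\}$ as $\{\,\cdot \geq k\,\}$ for the appropriate integer $k$ (the same rounding used in Propositions~\ref{prop:plurality-standard-lb} and~\ref{prop:borda-standard-lb}), Lemma~\ref{lem:binom-inc} yields $\Pr[Binom(n, q_{Q_b}) > c(n)] \leq \Pr[Binom(n, p_{Q_b}) > c(n)]$ for each such $b$; summing over $b \neq \astar$ and subtracting from $1$ shows this term weakly increases. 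This already settles the Borda bound entirely.

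It remains to treat the first argument of the Plurality $\min$, namely ${n \choose n/m}\,p_{\astar}^{n/m}$ (the scalar $\mu(\GammaR_{e_{\astar}})$ raised to the power $n/m$). Here I would observe that with $\R = \Rplurality$ the report maximizing $\langle w, \cdot\rangle$ is exactly $e_{\astar}$, since $\langle w, e_r\rangle = w_r$ and $\{\astar\} = \argmax_{a'} w_{a'}$; hence Corollary~\ref{cor:preserve-order-w} gives $p_{\astar} = \mu(\GammaR_{e_{\astar}}) \leq \mu(\Gaw_{e_{\astar}}) = q_{\astar}$ unconditionally. Since $t \mapsto t^{n/m}$ is non-decreasing on $[0,1]$, this term too weakly increases under $p \mapsto q$. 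As the minimum of two non-decreasing functions is non-decreasing, the full Plurality upper bound loosens.

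I expect the only care needed — rather than a genuine obstacle — to be the integer bookkeeping: neither $c(n)$ nor $n/m$ need be an integer, so one rounds up to the nearest attainable ball-count exactly as in the two propositions, and one must check that Lemma~\ref{lem:binom-inc} is invoked with the correct orientation (a smaller success probability gives a smaller upper tail). It is worth flagging in the writeup that this argument never uses Assumption~\ref{assum:w-inc-borda}: it rests only on the corollary's stated hypothesis about which sets $Q_b$ shrink, together with the unconditional monotonicity $p_{\astar} \leq q_{\astar}$ for Plurality, because the Borda upper bound involves only the alternatives $b \neq \astar$.
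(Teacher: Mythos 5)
Your proof is correct and follows essentially the same route the paper intends (the corollary is stated without an explicit proof, as an immediate consequence of the surrounding results): Lemma~\ref{lem:binom-inc} handles each $1-\sum_{b\neq\astar}$ term once the hypothesis gives $q_{Q_b}\leq p_{Q_b}$, and the unconditional inequality $p_{\astar}\leq q_{\astar}$ for $\Rplurality$ (Proposition~\ref{prop:move-up-sw}, noted in the paper right after Lemma~\ref{lem:binom-inc}) handles the first argument of the Plurality $\min$. Your observations about the rounding of $c(n)$ and the non-use of Assumption~\ref{assum:w-inc-borda} are consistent with the paper's statement and require no further argument.
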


\section{Changes in social welfare arising from anchored voting}\label{sec:change-sw}
While Proposition~\ref{prop:move-up-sw} implies that individuals will only vote more in line with $w$, it is well-known that Paretian and non-dictatorial voting rules cannot satisfy independence of irrelevant alternatives~\citep{arrow1951social}, and intuitively, this suggests that that voting according to anchored preferences may change votes to irrelevant alternatives.
While an individual vote might increase the score of an irrelevant alternative, it's unclear when and if that will actually change the alternative selected by the voting rule.
In order to understand this, we examine sets of voting profiles in histogram space.

For a fixed, finite set of $n$ voters, the set of possible induced histograms is finite, though large.
Given a density $\mu$ on individual utilities and the induced distribution $p := \left\{ \mu(\Gamma_r) \right\}$, weight $\alpha$, and anchoring point $w$, we examine $\Gaw$ and examine the histogram distributions $\nu$ and $\nusoc$ such that 
\begin{align*}
\nu(f_a) 
    &= \sum_{h : a \in f(h)} \frac{1}{|f(h)|}\Pr_{\bld u}[\#\GammaR(\bld u) = h] \\
&= \sum_{h : a \in f(h)} \frac{1}{|f(h)|} (\frac{n!}{\prod_r h_r!}) \prod_r p_r^{h_r}\end{align*}
and likewise for $\nusoc(f_a)$, substituting $\Pr_{\bld u}[\#\GammaR((1-\alpha) \bld u + \alpha w) = h]$, which is a function of $q := \{\mu(\Gaw_{\phi(r)})\}$.
The $\frac 1 {|f(h)|}$ term breaks ties uniformly at random

\subsection{Expected social welfare increases when external information reflects social preferences}

We are now equipped to ask how the anchoring point $w$ affects social welfare?
We give a sufficient condition for social welfare to increase in expectation in Theorem~\ref{thm:inc-SW}.
Then, in Proposition~\ref{prop:eps-uniform-sat-inc-SW} we show that if the density $\mu$ is smooth enough, then an anchoring point $w$ roughly according to the ``average utililty'' $\E_\mu u$, this sufficient condition is satisfied.
Let $\sw(f(h), \bld u) := \sum_i u_i(f(h))$ be the total utility attained by the voting outcome $f(h)$.
Throughout, we let $\Delta \sw(\bld u) := \sw(f(\#\GammaR((1-\alpha)\bld u + \alpha w), \bld u) - \sw(f(\#\GammaR(\bld u), \bld u))$, and we ask when $\E [\Delta \sw \mid w, \alpha] \geq 0$.
That is, when does social welfare increase (in expectation) under anchored preferences?

\begin{theoremE}[][]
\label{thm:inc-SW}
Let $w, \alpha, \mu$ be given.
    Suppose $v := \E_{u \sim \mu} u$, and let the set $\inc := \{a \in [m] : \nusoc(f_a) - \nu(f_a) \geq 0\}$ be the set of alternatives that weakly increase in probability of being selected under anchored preferences, and $\dec := [m] \setminus \inc$ be the set of alternatives that decrease.
    If $\max_{a \in \dec} v_a \leq \min_{a \in \inc} v_a$, then $\E_{\bld u}[\Delta \sw(\bld u) \mid w, \alpha] \geq 0$.
\end{theoremE}
\begin{proof}
    First, observe that since social welfare is additive and preferences are drawn iid, we can restrict our focus to $v$ instead of $\bld u$.
    \begin{align*}
        \E_{\bld u \sim \mu^n} [\Delta \sw(\bld u) \mid w,\alpha] &= \E_{\bld u \sim \mu^n}[\sum_{i \in [n]} \sum_{a \in [m]} u_i(a) \left(\Pr_{\bld u}[f(h(\bld u, \Gaw)) =  a \mid w] -  \Pr_{\bld u}[f((h(\bld u, \Gamma)) = a]\right) \\
        &= \E_{\bld u \sim \mu^n} [ \sum_{i \in [n]} \langle u_i, \nusoc(f) - \nu(f) \rangle] \\
        &= n \E_{u \sim \mu} [ \langle u, \nusoc(f) - \nu(f) \rangle] \\
        &= n \langle v, \nusoc(f) - \nu(f) \rangle  \geq 0 \iff 
\langle v, \nusoc(f) - \nu(f) \rangle \geq 0
    \end{align*}
    Now we argue that under the assumptions, that $\langle v, \nusoc(f) - \nu(f) \rangle \geq 0$.
    
    First, observe that since $\nu$ and $\nusoc$ are probability distributions (both subject to affine constraints), $\sum_a \nusoc(f_a) - \sum_a \nu(f_a) = 0$, and therefore 
    \begin{align}
    \sum_{a \in \inc} (\nusoc(f_a) - \nu(f_a)) &= - \sum_{a \in \dec}(\nusoc(f_a) - \nu(f_a)) &= \sum_{a \in \dec}(\nu(f_a) - \nusoc(f_a)).\label{eq:order-diff}
    \end{align}
This yields
    \begin{align*}
        \langle v, \nusoc(f) - \nu(f) \rangle \geq 0 &\iff  
        \sum_{a \in \inc} v_a (\nusoc(f_a) - \nu(f_a)) \geq \sum_{a \in \dec} v_a (\nu(f_a) - \nusoc(f_a))
        \end{align*}
        We can show the latter inequality, as
        \begin{align*}
        \sum_{a \in \inc} v_a (\nusoc(f_a) - \nu(f_a)) &\geq \min_{a \in \inc} v_a \sum_{a \in \inc} (\nusoc(f_a) - \nu(f_a)) \\
        &= \min_{a \in \inc} v_a \sum_{a \in \dec} (\nu(f_a) - \nusoc(f_a)) \qquad \text{Eq.~\eqref{eq:order-diff}}\\
        &\geq \max_{a \in \dec} v_a \sum_{a \in \dec} (\nu(f_a) - \nusoc(f_a)) \qquad \text{by assumption}\\
        &\geq \sum_{a \in \dec} v_a (\nu(f_a) - \nusoc(f_a))
    \end{align*}

\end{proof}
Intuitively, Theorem~\ref{thm:inc-SW} shows that expected social welfare increases if the anchoring point $w$ is at least a good approximation of the true expected utility $v$.

Moreover, if $\mu$ is approximately uniform (Definition~\ref{def:eps-uniform}), having small total variation distance (TV) from the uniform density, then the assumptions of Theorem~\ref{thm:inc-SW} are satisfied, yielding Proposition~\ref{prop:eps-uniform-sat-inc-SW}.

\begin{definition}[$\epsilon$-uniform density]\label{def:eps-uniform}
    A density $\mu$ is $\epsilon$-uniform if, $\sup_{A \in \mathcal{B}} \left| \int_A d\mu - \int_A dm \right| \leq \epsilon$.
    That is, $TV(\mu,m) \leq \epsilon$, where $m$ is the uniform density.
\end{definition}

Our notion of $\epsilon$-uniformness is a much stricter condition than necessary, but its generality allows us to draw conclusions about \emph{any} $w$ that is ordered roughly according to the ``average utility'' $\E_\mu u$.

\begin{lemmaE}[][all end]\label{lem:mu-same-order-wR}
    Let $f$ be an anonymous voting rule, and $\mu$ be $\epsilon$-uniform.
    If $2 \epsilon \leq \min_{s,t \in \R} |m(\Gaw_t) - m(\Gaw_s)|$, then $\{\mu(\Gaw_t)\}_t$ preserves the same order as $\langle w, \R\rangle$.
\end{lemmaE}
\begin{proofE}
    Since $\mu$ is $\epsilon$-uniform, then $|\int_{\Gaw_r} d\mu - \int_{\Gaw_r} dm|  = |\mu(\Gaw_r) - m(\Gaw_r)|\leq \epsilon$ implies $\mu(\Gaw_r) \in [m(\Gaw_r) - \epsilon, m(\Gaw_r) + \epsilon]$.
    As $\{m(\Gaw_r)\}$ has the same order as $\langle w, \R \rangle$ (by Corollary~\ref{cor:preserve-order-w}), the result follows immediately.
\end{proofE}

\begin{lemmaE}[][all end]\label{lem:nusoc-same-order-nusocdiff}
    Let $f$ be an anonymous voting rule and $\mu$ be $\epsilon$-uniform.
    If $2 \frac{\sum_{k=1}^n {n \choose k}(R\epsilon)^k}{m} \leq \min_{a,b} |\nu(f_a) - \nu(f_b)|$, then $\{\nusoc(f_a)\}_a$ preserves the same order as $\{\nusoc(f_a) - \nu(f_a)\}_a$.
\end{lemmaE}
\begin{proofE}
    Since $\mu$ is $\epsilon$-uniform, we have $|\mu(\Gamma_r) - m(\Gamma_r)| \leq \epsilon$, which implies $\mu(\Gamma_r) \in [\frac{1 - R\epsilon}{R}, \frac{1 + R \epsilon}{R}]$ for all $r \in \R$ since $f$ is anonymous.
    This yields
    \begin{align*}
        \sum_{h \in f_a} \frac{1}{|f(h)|} \frac{n!}{\prod_r h_r!} \prod_r(\frac{1 - R\epsilon}{R})^{h_r} &\leq \nu(f_a) \leq \sum_{h \in f_a} \frac{1}{|f(h)|} \frac{n!}{\prod_r h_r!} \prod_r(\frac{1 + R\epsilon}{R})^{h_r} \qquad \forall a \in [m] \\
        \frac{(1 - R\epsilon)^n}{m} &\leq \nu(f_a) \leq \frac{(1 + R\epsilon)^n}{m} \qquad \forall a \in [m] \\
        \frac 1 m + \frac{\sum_{k=1}^n {n \choose k}(-R \epsilon)^k}{m} &\leq \nu(f_a) \leq \frac 1 m + \frac{\sum_{k=1}^n {n \choose k}(R \epsilon)^k}{m} \qquad \forall a \in [m] 
    \end{align*}
    Therefore, $\nu(f_a) \in [\frac 1 m - \delta, \frac 1 m + \delta]$ for $\delta := \max\left(|\frac{\sum_{k=1}^n {n \choose k}(-R \epsilon)^k}{m}|, \frac{\sum_{k=1}^n {n \choose k}(R \epsilon)^k}{m}\right) = \frac{\sum_{k=1}^n {n \choose k}(R \epsilon)^k}{m}$.

    If $2 \delta$ is smaller than the smallest gap $|\nusoc(f_a) - \nusoc(f_b)|$, then the order of $\nusoc(f_a)$ is preserved.    
\end{proofE}

\begin{lemmaE}[][all end]\label{lem:wR-same-order-nusocdiff}
    Let $f$ be an anonymous positional scoring rule and $\mu$ be a $\epsilon$-uniform density for $\epsilon$ sufficiently small.
    Then if $\left\{\mu(\Gaw_r)\right\}_r$ has the same order as $\langle w, \R \rangle$, then $\left\{\nusoc(f_a) - \nu(f_a)\right\}$ has the same order as $w$. 
\end{lemmaE}
\begin{proofE}
    By Lemma~\ref{lem:nusoc-same-order-nusocdiff}, $\{\nusoc(f_a)\}$ has the same order as $\{\nusoc(f_a) - \nu(f_a)\}$.
    We show the contrapositive.
    Suppose $\{\nusoc(f_a) - \nu(f_a)\}$ has a different order then $w$, then so does $\{\nusoc(f_a)\}$.
    Then there exist $a,b \in  [m]$ such that $\nusoc(f_a) > \nusoc(f_b)$ but $w_a < w_b$.

    $\nusoc(f_a) > \nusoc(f_b) \implies \sum_{h \in f_a}\frac{1}{|f(h)|}\frac{n!}{\prod_r h_r!} (\prod_r \mu(\Gaw_r)^{h_r} - \prod_r \mu(\Gaw_r)^{h_{\sigma(r)}}) > 0$ for the permutation $\sigma$ switching the relative position of $a$ and $b$ in each report.

    This summation being positive implies the existence of an $h$ such that $f(h) = \{a\}$ and $\prod_r \mu(\Gaw_r)^{h_r} - \prod_r \mu(\Gaw_{\sigma(r)})^{h_r} > 0$.
    Moreover, $\inprod h {\R}_a > \langle h, \R \rangle_b$ since $f(h) = \{a\}$ and $f$ is a positional scoring rule.
    This implies the existence of a $t \in \R$ so that $\mu(\Gaw_t)^{h_t} - \mu(\Gaw_{\sigma(t)})^{h_t} > 0$ and $t_a > t_b$.The first statement implies $\mu(\Gaw_t) > \mu(\Gaw_{\sigma(t)})$, and the second implies $\sigma(t)_b > \sigma(t)_a$.
    Therefore,
    \begin{align*}
        \langle w, \sigma(t)-t \rangle &= w_a (\sigma(t)_a - t_a) + w_b (\sigma(t)_b - t_b) \\
        &= (w_b-w_a) (\sigma(t)_b - \sigma(t)_a) \geq 0~,
    \end{align*}
    which yields the contrapositive.
\end{proofE}

\begin{proposition}\label{prop:eps-uniform-sat-inc-SW}
    Suppose $f$ is an anonymous positional scoring rule, and $w$ and $v$ have the same order.
    Moreover, suppose $\mu$ is $\epsilon$-uniform for $\epsilon \leq \min_{s,t \in \R} \frac{|m(\Gaw_s) - m(\Gaw_t)|}{2}$ and $\frac{\sum_{k=1}^n {n \choose k}(R \epsilon)^k}{m}\leq \min_{a,b} \frac{|\nu(f_a) - \nu(f_b)|}{2}$, then expected social welfare increases.
\end{proposition}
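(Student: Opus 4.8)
The plan is to chain together the three lemmas (\ref{lem:mu-same-order-wR}, \ref{lem:nusoc-same-order-nusocdiff}, \ref{lem:wR-same-order-nusocdiff}) to verify the hypothesis of Theorem~\ref{thm:inc-SW}, and then invoke that theorem. Concretely, I would argue as follows. First, since $f$ is anonymous and $\mu$ is $\epsilon$-uniform with $\epsilon \leq \min_{s,t \in \R}\frac{|m(\Gaw_s) - m(\Gaw_t)|}{2}$, Lemma~\ref{lem:mu-same-order-wR} applies and tells us that $\{\mu(\Gaw_r)\}_r$ preserves the same order as $\langle w, \R\rangle$. Second, feeding this conclusion into Lemma~\ref{lem:wR-same-order-nusocdiff} (whose hypotheses — anonymous positional scoring rule, $\epsilon$-uniform $\mu$ with $\epsilon$ small, and $\{\mu(\Gaw_r)\}_r$ ordered like $\langle w,\R\rangle$ — are now all met, the last via step one and the smallness of $\epsilon$ being exactly the two stated bounds), we obtain that $\{\nusoc(f_a) - \nu(f_a)\}_a$ has the same order as $w$. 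Lemma~\ref{lem:nusoc-same-order-nusocdiff} is what powers the previous step under the second bound on $\epsilon$, so it is implicitly used.

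Next I would translate the ordering conclusion into the separation condition required by Theorem~\ref{thm:inc-SW}. Recall $\inc = \{a : \nusoc(f_a) - \nu(f_a) \geq 0\}$ and $\dec = [m]\setminus\inc$. Since $\{\nusoc(f_a)-\nu(f_a)\}_a$ is ordered like $w$, and by definition the entries of $\inc$ are exactly those indices where $\nusoc(f_a)-\nu(f_a)$ is largest (nonnegative) while $\dec$ collects the smallest (negative) entries — the differences sum to zero by Eq.~\eqref{eq:order-diff}, so $\inc$ is nonempty and sits at the top of the order — it follows that $\min_{a\in\inc} w_a \geq \max_{b\in\dec} w_b$. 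Here I would be slightly careful about ties / the exact meaning of ``same order,'' but the upshot is that the index set $\inc$ is upward-closed in the $w$-order. Because $v$ has the same order as $w$ by hypothesis, the same upward-closedness holds with $v$ in place of $w$: $\min_{a\in\inc} v_a \geq \max_{b\in\dec} v_b$. That is precisely the hypothesis $\max_{a\in\dec} v_a \leq \min_{a\in\inc} v_a$ of Theorem~\ref{thm:inc-SW}, so applying that theorem with $v = \E_{u\sim\mu} u$ gives $\E_{\bld u}[\Delta\sw(\bld u)\mid w,\alpha] \geq 0$, i.e. expected social welfare increases.

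The main obstacle I anticipate is the bookkeeping around ``preserves the same order,'' particularly making the jump from ``$\{\nusoc(f_a)-\nu(f_a)\}$ ordered like $w$'' to ``$\inc$ is a top-set in the $v$-order.'' One has to use that the differences $\nusoc(f_a)-\nu(f_a)$ sum to zero, so the sign pattern along the $w$-order is: nonnegative on a prefix (the high-$w$ alternatives) and negative on the complementary suffix — there is no interleaving. Combined with $w$ and $v$ sharing an order, this forces the $v$-values on $\inc$ to all dominate the $v$-values on $\dec$. A secondary subtlety is checking that the two explicit $\epsilon$-bounds in the proposition statement are exactly the ``$\epsilon$ sufficiently small'' hypotheses demanded inside Lemmas~\ref{lem:nusoc-same-order-nusocdiff} and \ref{lem:wR-same-order-nusocdiff}; this is really just matching the stated thresholds $2\epsilon \leq \min_{s,t}|m(\Gaw_s)-m(\Gaw_t)|$ and $2\frac{\sum_{k=1}^n\binom{n}{k}(R\epsilon)^k}{m} \leq \min_{a,b}|\nu(f_a)-\nu(f_b)|$, so it should be routine. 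Everything else is a direct citation chain.
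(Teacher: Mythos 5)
Your proposal matches the paper's proof: it chains Lemma~\ref{lem:mu-same-order-wR} and Lemma~\ref{lem:wR-same-order-nusocdiff} (with Lemma~\ref{lem:nusoc-same-order-nusocdiff} underneath) to get that $\{\nusoc(f_a)-\nu(f_a)\}_a$ is ordered like $w$, and then, since $v$ shares $w$'s order, concludes the separation hypothesis of Theorem~\ref{thm:inc-SW}. You actually spell out the final step (sign pattern of the differences and the $\inc$/$\dec$ split) more carefully than the paper does, but the argument is essentially identical.
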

\begin{proof}
    First, Lemma~\ref{lem:mu-same-order-wR} implies $\{\mu(\Gaw_{\phi(r)})\}$ has the same order as $\inprod w \R$, and therefore $\{\nusoc(f_a) - \nu(f_a)\}$ has the same ordering as $w$ by Lemma~\ref{lem:wR-same-order-nusocdiff}, which yields the conditions of Theorem~\ref{thm:inc-SW}.
\end{proof}

\begin{remark}
    While we show that expected social welfare increases if $w$ roughly aligns with social welfare, an adversarial anchoring point $w$ that is ordered in the reverse of $v$ will generally lead to a decrease in social welfare.
    This demonstrates the influence exhibited by the anchoring information $w$.
\end{remark}

Upon learning that welfare increases in expectation, a natural next question is to understand the probability the social welfare decreases.
Since political voting mechanisms are often run once every few years, understanding the probability that social welfare decreases is important, even when $w$ is perfectly aligned with expected utility $v$.

\begin{theoremE}[][]\label{thm:bound-prob-dec}
    Given a density $\mu$ such that preferences are drawn i.i.d. from $\mu$, and suppose the anchoring point $w$ and weight $\alpha \in [0,1]$.
    Then $\Pr_{\bld u}[\sw(f(\#\GammaR(\bld u)), \bld u) > \sw(f(\#\GammaR((1-\alpha)\bld u + \alpha w), \bld u)] \leq \E[\exp(\sw(f(\#\GammaR(\bld u)), \bld u) - \sw(f(\#\GammaR((1-\alpha)\bld u + \alpha w)), \bld u))]$.
    That is, the probability that social welfare decreases under anchored preferences is upper bounded.
\end{theoremE}
\begin{proof}
For notational simplicity, fix any $\epsilon \in (0, \min_{a, b : w_a > w_b} (w_a - w_b))$ and let $V := -\Delta \sw = \sw((1-\alpha)\bld u + \alpha w, \GammaR) - \sw(\bld u, \GammaR)$. 
\begin{align*}
\Pr[\Delta \sw < 0] &= \Pr[-\Delta \sw \geq \epsilon] &\text{Finite alternatives $m$}\\
        &\leq M_V(t) e^{-t \epsilon} \qquad \forall t > 0 &\text{Chernoff bound}\\
    &= \E_V [\exp (tV)] \exp(-t\epsilon) \qquad \forall t > 0 &\text{definition of MGF} \\
    &\leq \E_V [\exp(tV)] \qquad \forall t > 0 &\text{$-t\epsilon < 0 \implies \exp(-t\epsilon) \in (0,1)$}\\
    &\leq \E_V [\exp(V)] &\text{choose $t = 1$}\\
    &= \E_{\bld u} [\exp(\sw(f(\#\GammaR(\bld u), \bld u)) - \sw(f(\#\Gaw(\bld u)), \bld u)]
\end{align*}
\end{proof}
This bound is only meaningful (i.e., it is tighter than $1$) if $\E [\Delta \sw \mid w,\alpha] > 0$; that is, if voting by anchored preferences increases social welfare in expectation.
Moreover, as $\E [\Delta \sw \mid w,\alpha] \to \infty$ (i.e. social welfare increases by a significantly large quantity), the probability that social welfare decreases tends to $0$.

\section{Conclusions and future work}

This work provides methodologies to examine the outcomes of voting mechanisms \emph{in expectation} when voters are influenced by external information.
We often focus on the case where $w$ is aligned with social welfare. Unsurprisingly, imposing an anchoring point aligned with social welfare tends to increase social welfare in expectation.
Moreover, we give upper and lower bounds on the probability that any one alternative is selected by the voting mechanism, and show that, for the alternative $a$ maximizing $w_a$, this lower bound tightens, and the upper bound can loosen when anchored preferences.

Many directions for future work lie ahead.
For example, voters often receive external information conveying varying messages, and it would be fruitful to understand the effect of external information when there are multiple, conflicting sources of information.
Furthermore, it would be interesting to use these tools to examine how other mechanisms respond to socially-- or fairness--concerned players; do mechanisms with welfare or fairness guarantees lose those guarantees?

\newpage 

\section*{Acknowledgements}

We would like to thank Bailey Flanigan for helpful comments in scoping out and situating this work. 
This material is based upon work supported by the National Science Foundation under Award No. 2202898 and by National Science Foundation and Amazon under Award No. 2147187.

\bibliographystyle{named}
\bibliography{refs}

\begin{thebibliography}{}

\bibitem[\protect\citeauthoryear{Anshelevich \bgroup \em et al.\egroup
  }{2018}]{anshelevich2018approximating}
Elliot Anshelevich, Onkar Bhardwaj, Edith Elkind, John Postl, and Piotr
  Skowron.
\newblock Approximating optimal social choice under metric preferences.
\newblock {\em Artificial Intelligence}, 264:27--51, 2018.

\bibitem[\protect\citeauthoryear{Apesteguia \bgroup \em et al.\egroup
  }{2011}]{apesteguia2011justice}
Jose Apesteguia, Miguel~A Ballester, and Rosa Ferrer.
\newblock On the justice of decision rules.
\newblock {\em The Review of Economic Studies}, 78(1):1--16, 2011.

\bibitem[\protect\citeauthoryear{Arrow}{1951}]{arrow1951social}
Kenneth~J Arrow.
\newblock {\em Social choice and individual values}, volume~12.
\newblock Yale university press, 1951.

\bibitem[\protect\citeauthoryear{Bedaywi \bgroup \em et al.\egroup
  }{2023}]{bedaywi2023distortion}
Mark Bedaywi, Bailey Flanigan, Mohamad Latifian, and Nisarg Shah.
\newblock The distortion of public-spirited participatory budgeting.
\newblock 2023.

\bibitem[\protect\citeauthoryear{Boutilier \bgroup \em et al.\egroup
  }{2012}]{boutilier2012optimal}
Craig Boutilier, Ioannis Caragiannis, Simi Haber, Tyler Lu, Ariel~D. Procaccia,
  and Or~Sheffet.
\newblock Optimal social choice functions: A utilitarian view.
\newblock In {\em Proceedings of the 13th ACM Conference on Electronic
  Commerce}, EC '12, page 197–214, New York, NY, USA, 2012. Association for
  Computing Machinery.

\bibitem[\protect\citeauthoryear{Caragiannis \bgroup \em et al.\egroup
  }{2017}]{caragiannis2017subset}
Ioannis Caragiannis, Swaprava Nath, Ariel~D Procaccia, and Nisarg Shah.
\newblock Subset selection via implicit utilitarian voting.
\newblock {\em Journal of Artificial Intelligence Research}, 58:123--152, 2017.

\bibitem[\protect\citeauthoryear{Charikar \bgroup \em et al.\egroup
  }{2023}]{charikar2023breaking}
Moses Charikar, Prasanna Ramakrishnan, Kangning Wang, and Hongxun Wu.
\newblock Breaking the metric voting distortion barrier.
\newblock {\em arXiv preprint arXiv:2306.17838}, 2023.

\bibitem[\protect\citeauthoryear{Chen \bgroup \em et al.\egroup
  }{2014}]{chen2014altruism}
Po-An Chen, Bart~De Keijzer, David Kempe, and Guido Sch{\"a}fer.
\newblock Altruism and its impact on the price of anarchy.
\newblock {\em ACM Transactions on Economics and Computation (TEAC)},
  2(4):1--45, 2014.

\bibitem[\protect\citeauthoryear{Flanigan \bgroup \em et al.\egroup
  }{2023}]{flanigan2023distortion}
Bailey Flanigan, Ariel~D Procaccia, and Sven Wang.
\newblock Distortion under public-spirited voting.
\newblock {\em ACM Conference on Economics and Computation}, 2023.

\bibitem[\protect\citeauthoryear{Gibbard}{1973}]{gibbard1973manipulation}
Allan Gibbard.
\newblock Manipulation of voting schemes: a general result.
\newblock {\em Econometrica: journal of the Econometric Society}, pages
  587--601, 1973.

\bibitem[\protect\citeauthoryear{Lambert and
  Shoham}{2009}]{lambert2009eliciting}
Nicolas Lambert and Yoav Shoham.
\newblock Eliciting truthful answers to multiple-choice questions.
\newblock In {\em Proceedings of the 10th ACM conference on Electronic
  commerce}, pages 109--118, 2009.

\bibitem[\protect\citeauthoryear{Lambert}{2018}]{lambert2018elicitation}
Nicolas~S Lambert.
\newblock Elicitation and evaluation of statistical forecasts.
\newblock 2018.

\bibitem[\protect\citeauthoryear{Ledyard}{1997}]{ledyard1997public}
John Ledyard.
\newblock Public goods: A survey of experimental research, 1997.

\bibitem[\protect\citeauthoryear{Lehtinen and
  Kuorikoski}{2007}]{lehtinen2007unrealistic}
Aki Lehtinen and Jaakko Kuorikoski.
\newblock Unrealistic assumptions in rational choice theory.
\newblock {\em Philosophy of the Social Sciences}, 37(2):115--138, 2007.

\bibitem[\protect\citeauthoryear{Mitzenmacher \bgroup \em et al.\egroup
  }{2001}]{mitzenmacher2001power}
Michael Mitzenmacher, Andr\'ea Richa, and Ramesh Sitaraman.
\newblock The power of two random choices: A survey of techniques and results.
\newblock 2001.

\bibitem[\protect\citeauthoryear{Pennock and Chapman}{1977}]{pennock1977due}
James~Roland Pennock and John~W Chapman.
\newblock {\em Due Process: Nomos XVIII}, volume~30.
\newblock NYU Press, 1977.

\bibitem[\protect\citeauthoryear{Satterthwaite}{1975}]{satterthwaite1975strategy}
Mark~Allen Satterthwaite.
\newblock Strategy-proofness and arrow's conditions: Existence and
  correspondence theorems for voting procedures and social welfare functions.
\newblock {\em Journal of economic theory}, 10(2):187--217, 1975.

\bibitem[\protect\citeauthoryear{Schulte-Geers and
  Waggoner}{2022}]{schultegeers2022balls}
Ernst Schulte-Geers and Bo~Waggoner.
\newblock Balls and bins -- simple concentration bounds, 2022.

\bibitem[\protect\citeauthoryear{Skowron and Elkind}{2017}]{skowron2017social}
Piotr Skowron and Edith Elkind.
\newblock Social choice under metric preferences: Scoring rules and stv.
\newblock In {\em Proceedings of the AAAI Conference on Artificial
  Intelligence}, volume~31, 2017.

\bibitem[\protect\citeauthoryear{Staff}{2022}]{frenchstrategicvoting}
The Canadian~Press Staff.
\newblock France's election: `strategic' voting among montreal's french
  citizens, 2022.

\bibitem[\protect\citeauthoryear{Tsetlin \bgroup \em et al.\egroup
  }{2003}]{tsetlin2003impartial}
Ilia Tsetlin, Michel Regenwetter, and Bernard Grofman.
\newblock The impartial culture maximizes the probability of majority cycles.
\newblock {\em Social Choice and Welfare}, 21:387--398, 2003.

\end{thebibliography}

\newpage
\onecolumn
\appendix

\section{Omitted proofs}\label{app:omitted-proofs}

\printProofs

\end{document}